\newtheorem{theorem}{Theorem}[section]
\newtheorem{definition}[theorem]{Definition}
\theoremstyle{definition}
\newtheorem{remark}[theorem]{Remark}
\newtheorem{assumption}[theorem]{Assumption}
\numberwithin{equation}{section}
\newcommand{\trans}[1]{#1^{\mathsf{T}}}
\newcommand{\N}{\mathbb{N}}
\def\cD{\mathcal{D}}
\def\cF{\mathcal{F}}
\def\cH{\mathcal{H}}
\def\cP{\mathcal{P}}
\def\cQ{\mathcal{Q}}
\def\cT{\mathcal{T}}
\def\bD{\mathbb{D}}
\def\bE{\mathbb{E}}
\def\bF{\mathbb{F}}
\def\bN{\mathbb{N}}
\def\bP{\mathbb{P}}
\def\bQ{\mathbb{Q}}
\def\bR{\mathbb{R}}
\def\bf1{\mathbf{1}}
\newcommand{\1}{\mathbbm{1}}            
\newcommand{\set}[1]{\{#1\}}            
\newcommand{\Set}[1]{\left\{#1\right\}} 
\DeclareMathOperator{\dif}{d \!}        
\DeclareMathOperator{\var}{\mathrm{V}@\mathrm{R}}           
\DeclareMathOperator{\tvar}{\mathrm{TV}@\mathrm{R}}         
\DeclareMathOperator{\glr}{\mathrm{GLR}}                    
\DeclareMathOperator{\raroc}{\mathrm{RAROC}}                \DeclareMathOperator{\ait}{\mathrm{AIT}}                     
\DeclareMathOperator{\evar}{\mathrm{EV}@\mathrm{R}}
\begin{document}
\title{Acceptability Maximization}
\author{Gabriela Kov\'a\v{c}ov\'a\thanks{Vienna University of Economics and Business, Institute for Statistics and Mathematics, Vienna A-1020, AUT, \url{gabriela.kovacova@wu.ac.at} and \url{birgit.rudloff@wu.ac.at}.} \and Birgit Rudloff\footnotemark[1] 
\and Igor Cialenco\thanks{Department of Applied Mathematics, Illinois Institute of Technology,  10 W 32nd Str, Building RE, Room 220, Chicago, IL 60616, USA, 
	\url{cialenco@iit.edu}}
}
\maketitle

	\vspace{-2em}

	
	{\footnotesize
		\begin{tabular}{l@{} p{350pt}}
			\hline \\[-.2em]
			\textsc{Abstract}: \ & The aim of this paper is to study the optimal investment problem by 	using coherent acceptability indices (CAIs) as a tool to measure the portfolio performance. We call this problem the acceptability maximization. First, we study the one-period (static) case, and propose a numerical algorithm that approximates the original problem by a sequence of risk minimization problems. The results are applied to  several important CAIs, such as the gain-to-loss ratio, the risk-adjusted return on capital and the tail-value-at-risk based CAI. In the second part of the paper  we investigate the acceptability maximization in a discrete time dynamic setup. Using robust  representations of CAIs in terms of a family of dynamic coherent risk measures (DCRMs), we establish an intriguing dichotomy: if the corresponding family of DCRMs is recursive (i.e. strongly time consistent) and assuming some recursive structure of the market model, then the acceptability maximization problem reduces to just a one period problem and the maximal acceptability is constant across all states and times. On the other hand, if the family of DCRMs is not recursive, which is often the case, then the acceptability maximization problem ordinarily is a time-inconsistent stochastic control problem, similar to the classical mean-variance criteria. To overcome this form of time-inconsistency, we adapt to our setup the set-valued Bellman's principle recently proposed in~\cite{KovacovaRudloff2019} applied to two particular dynamic CAIs - the dynamic risk-adjusted return on capital and the dynamic gain-to-loss ratio. The obtained theoretical results are illustrated via numerical examples that include, in particular, the computation of the intermediate mean-risk efficient frontiers.  
			
			\\[0.5em]
			\textsc{Keywords:} \ & acceptability index, acceptability maximization, optimal portfolio, gain-to-loss ratio,  dynamic performance measures, tail-value-at-risk, set-valued Bellman principle 
			 \\
			\textsc{MSC2010:} \ &  91G10,  93E20, 93E35, 90C35,  91B30, 49L20 \\[1em]
			\hline
		\end{tabular}
	}

\section{Introduction}
The renowned Sharpe ratio, introduced in \cite{Sharpe1964}, besides being one of the best known tools in measuring the performance of financial portfolios, also played an important role in developing the  portfolio optimization theory. It is well-known that one of the major shortcomings of the Sharpe ratio, as a performance measure, is its lack of monotonicity, i.e. a portfolio with strictly larger future gains may have a smaller Sharpe ratio. Over the years, in particular to overcome this drawback, a number of performance measures were introduced such as the Gini ratio~\cite{ShalitYityaki84}, the MAD ratio~\cite{KonnoYamazaki91}, the minimax ratio~\cite{Young98}, the gain-to-loss ratio~\cite{BernardoLedoit00}, the Sortino-Satchell ratio~\cite{SortinoSatchell01}, among many others; see \cite{CheriditoKromer2012,OrtobelliETAL05} for a comprehensive survey of ratio type performance measures. This naturally raises the question which properties -- and, consequently, which performance measures -- are desirable. Cherny and Madan~\cite{ChernyMadan2009} took an axiomatic approach to performance measurement, in line with the classical axiomatic approach to risk measurement by Artzner et al.~~\cite{ArtznerDelbaenEberHeath1999}. In~\cite{ChernyMadan2009} the authors introduce the concept of a coherent acceptability index (CAI) as a measure of performance that satisfies a set of desirable properties -- monotonicity, quasi-concavity, scale invariance and the Fatou property. It was proved that an unbounded CAI $\alpha$ admits a robust representation  in terms of a family of coherent risk measures (CRMs) $(\rho^x)_{x\in(0,\infty)}$ given by 
\begin{align}
\label{AIRM_intro}
\alpha(D) = \sup \{ x \in (0, \infty) : \rho^x (D) \leq 0 \}.
\end{align}
Equivalently, the representation~\eqref{AIRM_intro} can be stated in terms of a family of acceptance cones, or in terms of a family of sets of probability measures. The concept of a coherent acceptability index in a dynamic setup was first studied in~\cite{BCZ2010}, and consequently in~\cite{BCDK2013,BCC2014,RosazzaGianinSgarra2012,BiaginiBion-Nadal2012}. 

The main goal of this paper is to study the optimization problem  of the form 
\begin{equation}\label{maxAI_intro}
\max\limits_{D \in \mathcal{D}} \; \alpha(D),
\end{equation}
where $\alpha$ is a static or dynamic CAI and $\cD$ is a set of feasible positions. We refer to this problem as \textit{acceptability maximization}. 
The obtained results contribute to the rich literature on optimal portfolio selection or optimization of performance, such as classical mean-variance portfolio analysis or Sharpe ratio maximization (cf.~\cite{AgarwalNaik2004,GoetzmannETAL02}). Clearly, acceptability maximization is a practically important and natural problem to study. Closest to the spirit of our study is \cite{EberleinMadan14}, where the authors solve an acceptability maximization problem for some specific choices of static CAIs (AIMIN, AIMAX, AIMINMAX and AIMAXMIN) which are represented through families of distortion functions (or Choquet integrals). To the best of our knowledge, this is the only work that considers optimal control problems with CAI criteria. The proposed solution in \cite{EberleinMadan14} fundamentally relies on the representation of CAIs in terms of distortion functions. 

In contrast to \cite{EberleinMadan14}, in this work we mostly exploit the representation~\eqref{AIRM_intro}, and we consider both the static and the dynamic case.  We start by considering the one-period (static) setup, presented in Section~\ref{sec_static}. First, we recall some key definitions and relevant results on CAIs (Section~\ref{sec_staticAI}), and then we propose a numerical algorithm for solving \eqref{maxAI_intro} that approximates the maximal acceptability by a sequence of risk minimization problems (Section~\ref{sec_AImaxStatic}). In Section~\ref{sec_examples} we apply this algorithm to several important CAIs, such as the gain-to-loss ratio, the risk-adjusted return on capital and the tail-value-at-risk based CAI.    

Undoubtedly, for many practical purposes, acceptability or performance needs to be measured in a multi-period setting where the investor dynamically rebalances her portfolio. We study this in Section~\ref{sec_dynamic}. As was noted in \cite{BCC2014} and later systemically addressed in \cite{BCP2014,BCP2014a}, the time consistency property stays at the heart of the matter when studying dynamic coherent acceptability indices (DCAIs) and their robust representations of the form \eqref{AIRM_intro} in terms of families of dynamic coherent risk measures (DCRMs) $(\rho_t^x)^{x>0}_{t=0,\ldots,T}$.   
We prove that if for every $x>0$ the DCRM $(\rho_t^x)_{t=0,\ldots,T}$ is recursive (or strongly time consistent as a DCRM), and assuming some recursive structure of the underlying market model, then the maximal acceptability is constant across all times and states. Thus, in this case, it is enough to solve the corresponding optimization problem only once, at one state and one period of time; see Section~\ref{sec_recursiveRM} for details. 
However, in many relevant examples of DCAIs,  $(\rho_t^x)_{t=0,\ldots,T}$ are not  strongly  time consistent, and, similar to the classical control problems with mean-variance criteria, problem \eqref{maxAI_intro} is time-inconsistent in the sense that the (naive) dynamic programming principles does not hold true. To overcome this challenge, we adapt to our setup the set-valued Bellman's principle recently proposed in~\cite{KovacovaRudloff2019}. This approach also provides the intermediate mean-risk efficient frontiers, in the spirit of the mean-variance efficient frontier. Within this approach, we consider two specific performance measures - the dynamic risk-adjusted return on capital (Section~\ref{sec:dRAROC}) and the dynamic gain-to-loss ratio (Section~\ref{sec:dGLR}). 
The majority of the proofs are deferred to the appendices.  

As mentioned above, this is the first attempt to study stochastic control problems with dynamic CAI criteria. While the static case is now relatively well understood, the acceptability maximization problem in a dynamic setup appears to be an interesting research area, with many open problems, primarily due to the time-inconsistent nature of such problems, as argued in this manuscript. In particular, it would be far-reaching to develop a Bellman's principle of optimality for a class of DCAIs, beyond particular examples. In addition, from a practical point of view, it would be important to study the acceptability maximization problem for a larger classes of indices, for example not necessarily coherent ones. The authors plan to treat these problems in future works.

\section{Acceptability Maximization in the Static Setting}
\label{sec_static}
In this section, we consider the static setting, before moving to the dynamic one in Section~\ref{sec_dynamic}. 
First, we recall the definition of a coherent acceptability index and its connection to coherent risk measures. This serves as our framework for studying the maximization of performance, i.e., acceptability maximization. In Subsection~\ref{sec_AImaxStatic} we provide a way to solve the acceptability maximization problem through a sequence of risk minimizations. At the end of the section, we provide examples of this approach.  Proofs can be found in Appendix~\ref{AppendixA}.

\subsection{Static Coherent Acceptability Indices and Risk Measures}
\label{sec_staticAI}
We start by briefly reviewing the notion of a (static) coherent acceptability index (CAI) and its connection to (static) coherent risk measures (CRMs), following~\cite{ChernyMadan2009}. 
The concept of acceptability was developed as a methodology to define axiomatically minimal desirable properties of a functional that is meant to measure or assess the performance of a financial position or trading portfolio. As usual, we consider an underlying probability space $(\Omega, \mathcal{F}, \mathbb{P})$, and we denote by $L^\infty:=  L^\infty (\Omega, \mathcal{F}, \mathbb{P})$ the space of essentially bounded random variables on this space. In what follows, all equalities and inequalities between random variables will be understood in a $\bP$ almost surely sense. In this section, an element $D\in L^\infty$ can be viewed as a discounted terminal cash flow of a zero-cost self-financed portfolio, or the terminal profit and loss (P\&L) of a financial position.  The mapping $D \mapsto \alpha(D) \in [0, \infty]$ assigns to the portfolio $D$ the degree of its acceptability, with higher values corresponding to more desirable positions.

\begin{definition}
A \textbf{coherent acceptability index (CAI)} is a function $\alpha: L^\infty \rightarrow [0, \infty]$ satisfying for all positions $D, D' \in L^\infty$ and any level $x \in (0, \infty)$
\begin{enumerate}
\item[(A1)]  Monotonicity: if $D \leq D'$, then $\alpha(D) \leq \alpha(D')$,
\item[(A2)] Scale invariance: $\alpha(\lambda D) = \alpha(D)$ for all $\lambda>0$,
\item[(A3)] Quasi-concavity: if $\alpha(D) \geq x$ and $\alpha(D') \geq x$, then $\alpha(\lambda D + (1-\lambda)D') \geq x$ for all $\lambda \in [0,1]$,
\item[(A4)] Fatou property: if $\vert D_n \vert \leq 1, \alpha(D_n) \geq x$ for all $n\geq 1$, and $D_n \overset{\bP}{\rightarrow} D$, then $\alpha(D)\geq x$.
\end{enumerate}
\end{definition}

In~\cite{ChernyMadan2009} four further properties -- law invariance, consistency with the second order stochastic dominance, arbitrage consistency and expectation consistency -- are discussed. These are not required for the coherent acceptability index, but, as the authors argue, they are desirable.  
Additionaly, coherent acceptability indices are closely related to coherent risk measures, a concept introduced in~\cite{ArtznerDelbaenEberHeath1999}. 
\begin{definition}
A \textbf{coherent risk measure (CRM)} is a function $\rho: L^\infty \rightarrow \mathbb{R}$  satisfying for any $D, D' \in L^\infty$
\begin{enumerate}
\item[(R1)] Monotonicity: if $D \leq D'$, then $\rho(D) \geq \rho(D')$,
\item[(R2)] Positive homogeneity: $\rho(\lambda D) = \lambda \rho(D)$ for all $\lambda>0$,
\item[(R3)] Translation invariance: $\rho(D + k) = \rho(D) - k$ for all $k \in \mathbb{R}$,
\item[(R4)] Subadditivity: $\rho(D + D') \leq \rho(D) + \rho(D')$, 
\item[(R5)] Fatou property: if $\vert D_n \vert \leq 1$, for all $n\geq 1$, and $D_n \overset{\bP}{\rightarrow} D$, then $\rho(D)\leq \liminf_{n\to\infty} \rho(D_n)$.
\end{enumerate}
\end{definition}
A family of coherent risk measures $(\rho^x)_{x \in (0, \infty)}$ is called \textbf{increasing} if $x \geq y >0$ implies $\rho^x (D) \geq \rho^y (D)$ for any $D \in L^\infty$. 

As was proved in~\cite{ChernyMadan2009}, there is a strong connection between CAIs and increasing families of CRMs. Namely, the following robust representation type result holds true: A map $\alpha: L^\infty \rightarrow [0, \infty]$, unbounded from above, is a coherent acceptability index if and only if there exists an increasing family of coherent risk measures $(\rho^x)_{x \in (0, \infty)}$ such that
\begin{align}\label{AIRM}
\alpha(D) = \sup \{ x \in (0, \infty) : \rho^x (D) \leq 0 \}
\end{align}
with the convention $\sup \emptyset = 0$. Equivalently, the representation \eqref{AIRM} can be formulated in terms of a family of acceptance sets, or an increasing family of sets of probability measures associated with dual representations of CRMs (see also Section~\ref{sec_examples}). For various degrees of generalization of \eqref{AIRM} see, for instance, 
\cite{MadanSchoutens2016,BCDK2013,BCC2014,RosazzaGianinSgarra2012,BiaginiBion-Nadal2012}.

\subsection{Algorithm for Acceptability Maximization: the Static Case}\label{sec_AImaxStatic}
We fix a set  $\cD\subset L^\infty$ of available or feasible positions. For example, $\cD$ could be the P\&Ls of portfolios that satisfy certain trading or other constraints. Our aim is to identify among the available positions the ones with the highest degree of acceptability. Namely, we wish to solve the following optimization problem,
\begin{align}\tag{$A$}
\label{maxAI}
\max\limits_{D \in \mathcal{D}} \; \alpha(D).
\end{align}
We denote the maximal acceptability by
\begin{align}\label{alpha*}
\alpha^* := \sup\limits_{D \in \mathcal{D}} \; \alpha(D)
\end{align}
and the set of maximally acceptable (optimal) portfolios by
$\mathcal{D}^* := \{ D \in \mathcal{D} : \alpha(D) = \alpha^* \}.$
Generally speaking, a maximally acceptable portfolio may not exist, i.e.~the set $\mathcal{D}^*$ is empty if $\alpha^*$ is not attained as a maximum. For $\epsilon > 0$ we define the set of $\epsilon$-optimal positions
\begin{align*}
\mathcal{D}^\epsilon := \{ D \in \mathcal{D} : \alpha(D) \geq \alpha^* - \epsilon \}.
\end{align*}
The following Lemma summarizes the properties of these sets.
\begin{restatable}{lemma}{lemmaD} \label{lemma_D} 
\mbox{}	
\begin{enumerate}
\item If the feasible set $\mathcal{D}$ is convex, then the sets $\mathcal{D}^*$ and $\mathcal{D}^\epsilon$ are convex, for any  $\epsilon > 0$.
\item The sets $\mathcal{D}^*$ and $\mathcal{D}^\epsilon$ are nested: $\mathcal{D}^* \subseteq \mathcal{D}^{\epsilon_1} \subseteq \mathcal{D}^{\epsilon_2}$, for any $\epsilon_2 > \epsilon_1 > 0$.
\item 
$ \mathcal{D}^* = \bigcap\limits_{\epsilon > 0} \mathcal{D}^\epsilon$.
\end{enumerate}
\end{restatable}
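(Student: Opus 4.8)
The plan is to handle all three parts through a single observation: each of the sets $\mathcal{D}^*$ and $\mathcal{D}^\epsilon$ is, up to intersection with $\mathcal{D}$, a superlevel set of $\alpha$, so that the axioms defining a CAI translate directly into the asserted properties. The only axiom doing real work is quasi-concavity (A3); monotonicity of the thresholds and the definition of the supremum take care of the rest. I would first record the reformulation $\mathcal{D}^* = \{D \in \mathcal{D} : \alpha(D) \geq \alpha^*\}$, valid because $\alpha(D) \leq \alpha^*$ for every $D \in \mathcal{D}$ by definition of the supremum, so that the condition $\alpha(D) \geq \alpha^*$ is equivalent to $\alpha(D) = \alpha^*$. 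This puts $\mathcal{D}^*$ and $\mathcal{D}^\epsilon$ on the same footing as superlevel sets at thresholds $\alpha^*$ and $\alpha^* - \epsilon$ respectively.

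For Part (1), I would fix the relevant threshold $x$ (namely $x = \alpha^*$ or $x = \alpha^* - \epsilon$) and take $D, D'$ in the set together with $\lambda \in [0,1]$. Convexity of $\mathcal{D}$ gives $\lambda D + (1-\lambda)D' \in \mathcal{D}$, while (A3) applied with $\alpha(D) \geq x$ and $\alpha(D') \geq x$ gives $\alpha(\lambda D + (1-\lambda)D') \geq x$; combining the two yields membership in the same superlevel set, which is exactly its convexity.

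Parts (2) and (3) are purely order-theoretic and require no axiom. For the nesting in (2), $D \in \mathcal{D}^*$ gives $\alpha(D) = \alpha^* \geq \alpha^* - \epsilon_1$, so $D \in \mathcal{D}^{\epsilon_1}$, and the inequality $\alpha^* - \epsilon_1 \geq \alpha^* - \epsilon_2$ for $\epsilon_2 > \epsilon_1$ immediately gives $\mathcal{D}^{\epsilon_1} \subseteq \mathcal{D}^{\epsilon_2}$. For (3), the inclusion $\mathcal{D}^* \subseteq \bigcap_{\epsilon > 0} \mathcal{D}^\epsilon$ is exactly the nesting from (2); conversely, if $D$ lies in every $\mathcal{D}^\epsilon$ then $\alpha(D) \geq \alpha^* - \epsilon$ for all $\epsilon > 0$, and letting $\epsilon \downarrow 0$ gives $\alpha(D) \geq \alpha^*$, which together with $\alpha(D) \leq \alpha^*$ forces $\alpha(D) = \alpha^*$, i.e. $D \in \mathcal{D}^*$.

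There is no genuine obstacle here; the only point demanding care is the degenerate values of $\alpha^*$, since (A3) is stated only for thresholds $x \in (0, \infty)$. When the threshold is nonpositive — $\alpha^* = 0$ for $\mathcal{D}^*$, or $\alpha^* - \epsilon \leq 0$ for $\mathcal{D}^\epsilon$ — one has $\mathcal{D}^* = \mathcal{D}$ respectively $\mathcal{D}^\epsilon = \mathcal{D}$ because $\alpha \geq 0$, so convexity is simply inherited from $\mathcal{D}$. When $\alpha^* = \infty$, one applies (A3) at every finite threshold $x$ simultaneously to conclude $\alpha(\lambda D + (1-\lambda)D') = \infty$, and in Part (3) both sides reduce to $\{D \in \mathcal{D} : \alpha(D) = \infty\}$. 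Finally, if $\mathcal{D}^*$ happens to be empty all three statements hold vacuously. Each of these case distinctions can be dispatched in a single line.
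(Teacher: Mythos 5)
Your proposal is correct and follows the same route as the paper: quasi-concavity (A3) for convexity, the definition of the sets for the nesting, and the identification of $\bigcap_{\epsilon>0}\mathcal{D}^\epsilon$ with $\{D\in\mathcal{D}:\alpha(D)\geq\alpha^*-\epsilon\ \forall\epsilon>0\}=\mathcal{D}^*$ for the third part. Your additional care with the degenerate thresholds ($\alpha^*-\epsilon\leq 0$ and $\alpha^*=\infty$, where (A3) only covers $x\in(0,\infty)$) is a reasonable refinement of the paper's one-line argument but does not change the approach.
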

\begin{proof}
The proof is deferred to Appendix~\ref{AppendixA}.
\end{proof}

To solve the maximization problem \eqref{maxAI}, we will use the robust representation \eqref{AIRM}, with $(\rho^x)_{x \in (0, \infty)}$ denoting the corresponding increasing family of CRMs. For a given level $x > 0$ we consider the problem of minimizing risk over the feasible set $\mathcal{D}$, 
\begin{align}\tag{$P_x$}\label{minRM}
\min\limits_{D \in \mathcal{D}} \rho^x(D).  
\end{align}
We denote the optimal value of the risk minimization problem~\eqref{minRM} by $p(x) := \inf\limits_{D \in \mathcal{D}} \rho^x(D)$,  and its optimal solution by $D^x \in \arg\min\limits_{D \in \mathcal{D}} \rho^x(D),$ assuming that the infimum $p(x)$ is attained. In what follows we make the following standing assumption:

\begin{assumption}\label{assumption}
	The acceptability index $\alpha$ is unbounded from above, and for every $x \in (0,\infty)$ the risk minimization problem~\eqref{minRM} attains its minimum.
\end{assumption}
The unboundness from above of $\alpha$ is usually satisfied  in all practically important cases, while attainability of the min in \eqref{minRM} can be guaranteed, for example, by assuming that $\cD$ is compact. With this at hand, and in view of \eqref{AIRM}, we note that: 
\begin{itemize}
\item if the risk minimization problem~\eqref{minRM} has a positive optimal value, then no portfolio in $\mathcal{D}$ has a degree of acceptability above $x$;
\item if the risk minimization problem~\eqref{minRM} has a non-positive
optimal value, then some portfolio in $\mathcal{D}$ has a degree of acceptability of at least $x$.  
\end{itemize}

The next result summarizes the above observations, which are used in developing Algorithm~\ref{alg} that solves the acceptability maximization problem \eqref{maxAI}.

\begin{restatable}{lemma}{lemmaax}
\label{lemma_ax}
Let Assumption~\ref{assumption} hold and let $x \in (0, \infty)$.
\begin{enumerate}
\item If $p(x) > 0$, then $x \geq \alpha^*.$
\item If $p(x) \leq 0$, then $x \leq \alpha^*.$
\item If $x < \alpha^* $, then $p(y) \leq 0$ for all $y \leq x.$
\item If $x > \alpha^*$, then $p(y) > 0$ for all $y \geq x.$
\end{enumerate}
\end{restatable}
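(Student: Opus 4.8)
The plan is to use the robust representation~\eqref{AIRM} as the single bridge between the family $(\rho^x)$ and the maximal acceptability $\alpha^*$, and to observe that all four statements are really two statements together with their contrapositives. First I would prove parts~1 and~2, which characterize the sign of $p(x)$ in terms of the position of $x$ relative to $\alpha^*$; then parts~3 and~4 follow by combining these with the monotonicity of the family $(\rho^x)$.

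For part~2, suppose $p(x) \leq 0$. By Assumption~\ref{assumption} the infimum is attained, so there is $D^x \in \cD$ with $\rho^x(D^x) = p(x) \leq 0$. Then $x$ belongs to the set $\{ y \in (0,\infty) : \rho^y(D^x) \leq 0 \}$ whose supremum is $\alpha(D^x)$, so $\alpha(D^x) \geq x$. Since $\alpha^* = \sup_{D \in \cD} \alpha(D) \geq \alpha(D^x) \geq x$, we conclude $x \leq \alpha^*$. Part~1 is the contrapositive: if $x < \alpha^*$ did \emph{not} hold in the form we need, we argue directly. If $p(x) > 0$, then for every $D \in \cD$ we have $\rho^x(D) \geq p(x) > 0$, hence $x \notin \{ y : \rho^y(D) \leq 0 \}$ for every $D$. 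Because the family is increasing, $\rho^y(D) \geq \rho^x(D) > 0$ for all $y \geq x$, so no $y \geq x$ lies in that set either; therefore $\alpha(D) = \sup\{ y : \rho^y(D) \leq 0 \} \leq x$ for every $D \in \cD$ (using the convention $\sup\emptyset = 0 \leq x$ in the degenerate case). Taking the supremum over $D$ gives $\alpha^* \leq x$.

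For parts~3 and~4 I would reduce to parts~1 and~2 via the increasing property of $(\rho^x)$. For part~4, suppose $x > \alpha^*$ and take any $y \geq x$. By part~1 applied to $y$ it suffices to show $p(y) > 0$; but I would instead argue that if $p(y) \leq 0$ then part~2 gives $y \leq \alpha^*$, contradicting $y \geq x > \alpha^*$. Hence $p(y) > 0$. For part~3, suppose $x < \alpha^*$ and take any $y \leq x$, so $y < \alpha^*$; if $p(y) > 0$, then part~1 yields $y \geq \alpha^*$, a contradiction, so $p(y) \leq 0$. Thus parts~3 and~4 are immediate consequences of parts~1 and~2 by contradiction, and no separate monotonicity argument is even needed once 1 and~2 are in hand.

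I do not anticipate a genuine obstacle here; the only delicate point is the careful handling of the boundary case $x = \alpha^*$ and the $\sup\emptyset = 0$ convention, since the four claims are deliberately phrased with a mix of strict and non-strict inequalities so that they dovetail exactly at $x = \alpha^*$. The key is that the attainment hypothesis in Assumption~\ref{assumption} is what lets me pass from $p(x) \leq 0$ to an \emph{actual} feasible witness $D^x$ with $\alpha(D^x) \geq x$, rather than only an approximate one; this is the single nontrivial use of the standing assumption in the argument.
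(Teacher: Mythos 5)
Your proposal is correct and follows essentially the same route as the paper: parts~1 and~2 are proved exactly as in the appendix (the attained minimizer serves as the witness for part~2, and the representation~\eqref{AIRM} together with the increasing property of $(\rho^x)$ gives part~1), and your observation that parts~3 and~4 are just the contrapositives of parts~1 and~2 (applied at the level $y$) is a valid and slightly tidier packaging of what the paper proves by separate direct arguments using the same ingredients. Your closing remark about where Assumption~\ref{assumption} enters matches the paper's own discussion in Remark~\ref{remark_modified0}(iii).
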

\begin{proof}
The proof is given in Appendix~\ref{AppendixA}.
\end{proof}

The main idea of the proposed numerical solution of \eqref{maxAI} is to approximate the maximal acceptability by a sequence of risk minimization problems \eqref{minRM} for some appropriately chosen levels - a variation of the bisection method that will find a pair $(x,D)$  maximizing the level $x$ while satisfying $\rho^x(D)\leq0$. 
First, find two levels with opposite signs of the minimal risk $p(x)$, namely find a lower and upper bound on the maximal acceptability $\alpha^*$. Then, iteratively decrease the distance between the two bounds. This replaces one acceptability maximization problem~\eqref{maxAI} with a sequence of risk minimization problems~\eqref{minRM}. This becomes particularly useful if the acceptability maximization is complicated and the risk minimization is easier to solve. Note that if the feasible set $\mathcal{D}$ is convex, then the risk minimization becomes a convex optimization problem.

\begin{algorithm}[h] \small
\caption{Approximating maximal acceptability $\alpha^*$ via risk minimization}
\SetAlgoLined
\SetKw{AND}{\textbf{and}}
\SetKw{OR}{\textbf{or}}
 \KwIn{initial level $x_0 \in (0, \infty)$, max. iterations $\bar{M} \in \mathbb{N}$ of Step 1, tolerance $\epsilon > 0$.}
 \textit{Step 1: Find an initial interval $x_L \leq \alpha^* \leq x_U$} \\
 Set $n := 0; \, x_L := 0; \, x_U := \infty; \, \bar{D} :=\textrm{null}$ \;
 \While{($x_L == 0$ \OR $x_U == \infty$) \AND ($n < \bar{M}$)}{
  \eIf{ The optimal value of (P$_{x_n}$) is positive }{
   $x_U := x_n,$ select $x_{n+1} := x_U / 2$ \;
   }
   {
   $x_L := x_n,$ select $x_{n+1}:= 2 \cdot x_L,$ assign $\bar{D} := D^{x_n}$ \;
  }
  $n := n+1$ \;
 }
 
 \textit{Step 2: Decrease the length of the interval $[x_L, x_U]$ (via bisection)} \\
 \While{($x_U - x_L \geq \epsilon$) \AND ($n < \bar{M}$)}{
  Select $x := (x_U + x_L)/2$ \;
  \eIf{ The optimal value of \eqref{minRM} is positive }{
   $ x_U := x$ \;
   }
   {
   $ x_L := x,$ assign $\bar{D} := D^x$ \;
  }
 }
 \KwOut{Interval $[x_L, x_U]$ as an approximation to $\alpha^*$, portfolio $\bar{D}$ as approximately optimal one}
 \label{alg}
\end{algorithm}

The next result summarizes the key features of Algorithm~\ref{alg}.  
\begin{restatable}{lemma}{lemmaalg}
\label{lemma_alg}
Suppose that Assumption~\ref{assumption} holds, and let $x_0 \in (0, \infty)$ be the initial (seed) value, $\bar{M} \in \mathbb{N}$ be the maximal number of iterations (of Step 1) and $\epsilon > 0$ be the tolerance level. Denote
$$
\underline{\alpha} := x_0 \cdot 2^{-\bar{M}+1}, \quad \overline{\alpha} := x_0 \cdot 2^{\bar{M}-1}.
$$
\begin{enumerate}
\item If $\alpha^* \in [0, \underline{\alpha})$, then  Algorithm~\ref{alg} returns $\underline{\alpha}$ as an upper bound for $\alpha^*$, no acceptable portfolio is found.
\item If $\alpha^* \in (\overline{\alpha}, \infty]$, then Algorithm~\ref{alg} returns $\overline{\alpha}$ as a lower bound for $\alpha^*$ and a portfolio with (at least) this degree of acceptability.
\item If $\alpha^* \in (\underline{\alpha}, \overline{\alpha})$, then
\begin{enumerate}
\item Algorithm~\ref{alg} returns bounds $x_L$ and  $x_U$ such that $x_L \leq \alpha^* \leq x_U$ and $x_U - x_L < \epsilon$,
\item Algorithm~\ref{alg} returns an $\epsilon$-solution, i.e.~$\bar{D} \in \mathcal{D}^\epsilon$,
\item Step 2 of Algorithm~\ref{alg} terminates after at most $\left \lceil{\log_2 \frac{x_0}{\epsilon} + \bar{M} - 2}\right \rceil $ iterations.
\end{enumerate}
\end{enumerate}
\end{restatable}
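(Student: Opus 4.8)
The plan is to read Algorithm~\ref{alg} as a bracketing (bisection) scheme in which the sign of the minimal risk $p(x)$ serves as an oracle locating $\alpha^*$: by Lemma~\ref{lemma_ax}, $p(x)>0$ forces $x\ge\alpha^*$ while $p(x)\le 0$ forces $x\le\alpha^*$, and parts (3)--(4) of that lemma ensure that once a level lies on one side of $\alpha^*$, every further level on that side keeps the correct sign. Consequently, whenever both $x_L$ and $x_U$ are finite the invariant $x_L\le\alpha^*\le x_U$ holds, and it is preserved by every midpoint update in Step~2: setting $x_U:=x$ when $p(x)>0$ and setting $x_L:=x$, $\bar{D}:=D^x$ when $p(x)\le 0$ is exactly the sign-correct assignment.

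First I would dispose of the two extreme regimes. If $\alpha^*\in[0,\underline{\alpha})$, then every level examined in Step~1, namely $x_0,x_0/2,\dots,x_0/2^{\bar{M}-1}=\underline{\alpha}$, exceeds $\alpha^*$, so $p$ is positive at each of them; Step~1 therefore only halves, $x_U$ descends to $\underline{\alpha}$ while $x_L$ and $\bar{D}$ are never set, and Step~2 is skipped since the shared counter has reached $\bar{M}$. As $p(\underline{\alpha})>0$ gives $\underline{\alpha}\ge\alpha^*$, the returned $\underline{\alpha}$ is a valid upper bound and no portfolio is produced, which is (1). The case $\alpha^*\in(\overline{\alpha},\infty]$ is symmetric: every examined level $x_0,2x_0,\dots,2^{\bar{M}-1}x_0=\overline{\alpha}$ lies below $\alpha^*$, so $p$ is non-positive throughout, Step~1 only doubles, $x_L$ climbs to $\overline{\alpha}$ with $\bar{D}:=D^{\overline{\alpha}}$, and since $\rho^{\overline{\alpha}}(\bar{D})=p(\overline{\alpha})\le 0$ the representation~\eqref{AIRM} yields $\alpha(\bar{D})\ge\overline{\alpha}$, which is (2).

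For the central regime $\alpha^*\in(\underline{\alpha},\overline{\alpha})$ I would show that Step~1 brackets $\alpha^*$ before the budget is exhausted: in the halving branch the level $\underline{\alpha}<\alpha^*$ satisfies $p(\underline{\alpha})\le 0$, and in the doubling branch the level $\overline{\alpha}>\alpha^*$ satisfies $p(\overline{\alpha})>0$, so a sign change is recorded among the first $\bar{M}$ levels. At that moment $x_U=2x_L$ with $x_L\le\alpha^*\le x_U$, and Step~2 then bisects, halving $x_U-x_L$ each iteration while preserving the bracket, until $x_U-x_L<\epsilon$; this is (3a). For (3b), $\bar{D}$ is at every stage the minimiser $D^{x_L}$ of the current lower endpoint, so $\rho^{x_L}(\bar{D})=p(x_L)\le 0$ and hence $\alpha(\bar{D})\ge x_L$ by~\eqref{AIRM}; since at termination $x_L>x_U-\epsilon\ge\alpha^*-\epsilon$, we get $\alpha(\bar{D})\ge\alpha^*-\epsilon$, that is $\bar{D}\in\mathcal{D}^\epsilon$. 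For (3c), the bracket leaving Step~1 has width $x_U-x_L=x_L$, which is largest in the fully-doubled case where $x_L=2^{\bar{M}-2}x_0$; as each bisection halves the width, the number of Step~2 iterations is at most $\lceil\log_2((x_U-x_L)/\epsilon)\rceil\le\lceil\log_2(x_0/\epsilon)+\bar{M}-2\rceil$.

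The step I expect to be most delicate is the bookkeeping of the single iteration counter $n$ shared by the two phases. One must verify that in the central regime Step~1 genuinely exits with $n<\bar{M}$, so that the gate $n<\bar{M}$ guarding Step~2 lets the bisection run, and that the coarsest bracket surviving Step~1 indeed has width at most $2^{\bar{M}-2}x_0$, which is what feeds the logarithmic bound in (3c). The awkward sliver is when $\alpha^*$ lies within a factor two of $\underline{\alpha}$, so that the first sign change is detected only at the last admissible level; there the clean conclusion rests on $\bar{M}$ being large enough that even this coarsest surviving bracket is already narrower than $\epsilon$.
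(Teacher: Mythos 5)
Your proof is correct and follows essentially the same route as the paper's: Lemma~\ref{lemma_ax} as a sign oracle for $p(x)$, the two extreme regimes handled by pure halving (resp.\ doubling), the $\epsilon$-optimality of $\bar D=D^{x_L}$ via $\rho^{x_L}(\bar D)=p(x_L)\le 0$, and the worst-case post--Step-1 bracket of width $x_0\cdot 2^{\bar M-2}$ feeding the iteration count in (3c). The ``awkward sliver'' you flag at the end --- the bracket being detected only at the last admissible Step-1 level, so that the shared counter reaches $\bar M$ and the gate $n<\bar M$ blocks Step~2 --- is a genuine imprecision in the algorithm/lemma as stated, but the paper's own proof glosses over it in exactly the same way, so you are not missing anything the authors supply.
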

\begin{proof}
The proof is postponed to Appendix~\ref{AppendixA}.
\end{proof}

\begin{remark}\label{remark_modified0}
	
Several comments are in order: 
\begin{enumerate}
	\item[(i)] 
In Step 1, instead of the halving ($x_{n+1} := x_U / 2$), respectively the doubling ($x_{n+1}:= 2 \cdot x_L$), one could select any $x_{n+1} < x_U,$ respectively $x_{n+1} > x_L.$ Similarly, in Step 2 one could replace the bisection with any choice of $x \in (x_L, x_U).$ The results of Lemma~\ref{lemma_alg} would differ in the interval $(\underline{\alpha}, \overline{\alpha})$ on which $\alpha^*$ is identified and the number of iterations.

\item[(ii)]  The case $\alpha^* = x$ is not included in Lemma~\ref{lemma_ax}. If $\alpha^* = x$, then we can only say that $p(y) \leq 0$ for all $y < x$. The sign of $p(\alpha^*)$ is not clear, since we do not know if the suprema in~\eqref{alpha*} and~\eqref{AIRM} are attained as maxima. Consequently, in the cases $\alpha^* \in \{ \underline{\alpha}, \overline{\alpha} \}$ we cannot derive the behaviour of Algorithm~\ref{alg}.

\item [(iii)]
Assumption~\ref{assumption} allows us to merge the case $p(x) = 0$ with the case $p(x) < 0$. Without it, we would need to distinguish between attained and not attained infimum for $p(x) = 0$. If the infimum $p(x) = 0$ is attained for some portfolio $\tilde{D} \in \mathcal{D}$, then $\rho^x (\tilde{D}) = 0$ and $x$ is a lower bound on $\alpha^*$. If the infimum is not attained, then for all positions $D \in \mathcal{D}$ we have $\rho^x(D) > 0,$ so $x$ is an upper bound on $\alpha^*.$ This distinction would need to be built into Algorithm~\ref{alg}. Alternatively, if $p(x) = 0$, this level $x$ could be discarded and the iteration repeated with some other choice of level in the appropriate interval. However, it is not clear how many such repetitions might be needed.

\item[(iv)] 
Instead of specifying the maximal number of iterations $\bar{M}$ and the initial value $x_0$ we could directly specify the interval $[\underline{\alpha}, \overline{\alpha}]$ in which the optimal $\alpha^*$ would be searched. Then, Step 1 of the algorithm would need to check the signs of $p(\underline{\alpha})$ and $p(\overline{\alpha})$ and terminate immediately if $\alpha^*$ lies in the interval $[0, \underline{\alpha})$ or $(\overline{\alpha}, \infty].$ Step 2 would require $\left \lceil{ \log_2 \frac{\overline{\alpha} - \underline{\alpha}}{\epsilon} }\right \rceil $ iterations to terminate, assuming bisection steps. This might be of interest especially if the risk measure is well defined also for the limiting cases $x = 0$ and $x = \infty,$ see Section~\ref{sec_examples} for an example.

\end{enumerate}

\end{remark}

Algorithm~\ref{alg} with tolerance $\epsilon$ outputs an $\epsilon$-solution to the acceptability maximization problem, an element of the set $\mathcal{D}^\epsilon.$ We also know that the sets of $\epsilon$-solutions are nested and intersect in the set of optimal solutions. Therefore, it is natural to ask about the convergence of the algorithm output as the tolerance $\epsilon$ vanishes. As the next result shows, such convergence holds true if the feasible set $\mathcal{D}$ is compact. Generally speaking, for a non-compact $\mathcal{D}$ it is possible to construct counter-examples, where the $\epsilon$-solutions $D^\epsilon \in \mathcal{D}^\epsilon$ converge to an (infeasible) element outside of a (non-empty) optimal set $\mathcal{D}^*,$ or diverge.

\begin{restatable}{lemma}{lemmaconvergence}\label{lemma_convergence}
Let $\alpha^* \in (\underline{\alpha}, \overline{\alpha})$ and suppose that the feasible set $\mathcal{D}$ is a compact set w.r.t. the topology of convergence in probability. Let $\{ D^{\epsilon_n} \}_{n \in \N}$ be a sequence of solutions outputed from Algorithm~\ref{alg} for a sequence of tolerances $\{ \epsilon_n \}_{n \in \N}$ with $\lim\limits_{n \rightarrow \infty} \epsilon_n = 0$. Then $\{ D^{\epsilon_n} \}_{n \in \N}$ has an $\bP$-a.s. convergent  subsequence whose limit belongs to the set of optimal solutions $\mathcal{D}^*.$
\end{restatable}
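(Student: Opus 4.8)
The plan is to produce a limit point of the algorithm outputs by compactness and then transfer their asymptotic optimality to that limit by an upper-semicontinuity argument extracted from the robust representation~\eqref{AIRM}. First I would record what the algorithm gives us: since $\alpha^*\in(\underline{\alpha},\overline{\alpha})$, Lemma~\ref{lemma_alg}(3)(b) guarantees $D^{\epsilon_n}\in\mathcal{D}^{\epsilon_n}$, i.e. $\alpha(D^{\epsilon_n})\geq\alpha^*-\epsilon_n$ for every $n$. Because the topology of convergence in probability is metrizable (e.g. by the Ky Fan metric), the compact set $\mathcal{D}$ is sequentially compact, so some subsequence $D^{\epsilon_{n_k}}$ converges in probability to a limit $D^*\in\mathcal{D}$; convergence in probability forces $\bP$-a.s. convergence along a further subsequence, so after relabelling I may assume $D^{\epsilon_{n_k}}\to D^*$ $\bP$-a.s. with $D^*\in\mathcal{D}$. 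Since $D^*\in\mathcal{D}$ we trivially have $\alpha(D^*)\leq\alpha^*$, so the whole content is to prove $\alpha(D^*)\geq\alpha^*$, that is, $D^*\in\mathcal{D}^*$.

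I would organise the remaining argument through the super-level sets. Fix any $\delta>0$; for all sufficiently large $k$ one has $\epsilon_{n_k}<\delta$, so by the nestedness in Lemma~\ref{lemma_D}(2) we get $D^{\epsilon_{n_k}}\in\mathcal{D}^{\epsilon_{n_k}}\subseteq\mathcal{D}^{\delta}$. If each set $\mathcal{D}^{\delta}$ is closed with respect to convergence in probability, then the in-probability limit satisfies $D^*\in\mathcal{D}^{\delta}$, and letting $\delta\downarrow0$ together with Lemma~\ref{lemma_D}(3) yields $D^*\in\bigcap_{\delta>0}\mathcal{D}^{\delta}=\mathcal{D}^*$, which is exactly the claim. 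Thus the proof reduces to the single statement that the level set $\{D\in\mathcal{D}:\alpha(D)\geq\beta\}$ is closed in probability for every $\beta$, equivalently that $\alpha$ is upper semicontinuous along probability-convergent sequences drawn from $\mathcal{D}$.

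To establish this I would pass to the representation~\eqref{AIRM}. Fixing a level $y<\beta$ and using that the family $(\rho^x)_x$ is increasing, the inequality $\alpha(D^{\epsilon_{n_k}})\geq\beta$ forces $\rho^y(D^{\epsilon_{n_k}})\leq0$ for all large $k$. It then suffices to pass to the limit $\rho^y(D^*)\leq0$ for every $y<\beta$ and let $y\uparrow\beta$, which gives $\alpha(D^*)\geq\beta$. The natural device for the passage is the Fatou property, namely the lower semicontinuity (R5) of the coherent $\rho^y$ (or, applied directly to $\alpha$, property~(A4)): if the $D^{\epsilon_{n_k}}$ were controlled, lower semicontinuity would give $\rho^y(D^*)\leq\liminf_k\rho^y(D^{\epsilon_{n_k}})\leq0$, which is precisely the inequality in the right direction.

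The main obstacle, and the only genuinely delicate point, is exactly this limit passage. The Fatou properties~(R5) and~(A4) are formulated only for uniformly bounded sequences ($\lvert D_n\rvert\leq1$), whereas a sequence inside the compact set $\mathcal{D}$ that converges in probability need be only bounded in probability, not in $L^\infty$; and without such a bound lower semicontinuity can genuinely fail for a probability-convergent sequence even when its limit lies in $L^\infty$. I would attempt to remove the boundedness gap by invoking scale invariance~(A2) and positive homogeneity~(R2) to reduce each $D^{\epsilon_{n_k}}$ to the unit ball, where~(A4)/(R5) apply directly, and the crux is to arrange the normalisations so that the normalised sequence still converges in probability to the corresponding normalisation of $D^*$; reconciling the uniform $L^\infty$-control demanded by Fatou with the weaker compactness available is where essentially all the work lies, the remaining steps being routine.
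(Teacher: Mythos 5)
Your proposal follows the paper's route essentially step for step --- extraction of a convergent subsequence by compactness, reduction to showing $\alpha(\tilde D)\geq \alpha^*-\delta$ for every $\delta>0$, and the intention to close the limit passage by combining scale invariance with the Fatou property --- but you stop exactly where the proof has to be finished, declaring the normalisation step to be ``where essentially all the work lies.'' That step is the whole content of the lemma, so as written the proposal has a genuine gap. The paper closes it in one line: compactness of $\mathcal{D}$ is taken to supply a \emph{single} constant $C<\infty$ with $\vert D\vert\leq C$ for \emph{all} $D\in\mathcal{D}$. Dividing the entire subsequence and its limit by this one fixed constant, convergence in probability of $\tfrac{1}{C}D^{\epsilon_{n_k}}$ to $\tfrac{1}{C}\tilde D$ is immediate (it is division by a deterministic scalar, not a position-dependent normalisation), the uniform bound $\vert\tfrac{1}{C}D^{\epsilon_{n_k}}\vert\leq 1$ holds by construction, so (A4) applies directly to give $\alpha(\tfrac{1}{C}\tilde D)\geq \alpha^*-\delta$, and (A2) transfers this back to $\tilde D$; letting $\delta\downarrow 0$ finishes the proof. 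The difficulty you anticipate --- ``arranging the normalisations so that the normalised sequence still converges in probability'' --- simply does not arise once one normalises by a global constant rather than element by element.

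That said, your instinct that something is delicate here is not misplaced, only misdirected. The questionable point is not the convergence of the normalised sequence but the existence of the uniform bound itself: compactness with respect to the topology of convergence in probability does not, in general, force a subset of $L^\infty$ to be uniformly bounded in the essential supremum norm (the set consisting of $n\1_{A_n}$ with $\bP(A_n)\to 0$ together with its limit $0$ is compact in probability but unbounded in $L^\infty$). The paper asserts the bound without comment; this is harmless in the finite-$\Omega$ setting of the numerical examples, but in the generality in which the lemma is stated it is an additional hypothesis in disguise. To make your argument complete you should either impose uniform $L^\infty$-boundedness of $\mathcal{D}$ explicitly or restrict to settings where it follows from compactness, and then finish exactly as above.
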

\begin{proof}
The proof is given in Appendix~\ref{AppendixA}.
\end{proof}

\subsection{Numerical Examples}\label{sec_examples}
We will illustrate the proposed algorithm with three examples of CAIs: the acceptability index corresponding to the tail-value-at-risk ($\ait$), the gain-to-loss ratio ($\glr$) and the risk-adjusted return on capital ($\raroc$). First, we define these CAIs as well as identify the families of risk measures from the robust representation \eqref{AIRM}. 

\begin{enumerate}
\item
The $\tvar$ at level $q \in (0,1)$ is defined as
\begin{align*}
 \tvar_q (D) = \frac{1}{q} \int\limits_0^q \var_p (D) \dif p,
\end{align*}
where $\var_p(D) := \inf \{ r \in \mathbb{R} \; : \; \mathbb{P}(D+r <0) \leq p \}$ is the value-at-risk at level $p\in (0,1)$. It is well known that $\var$ is not a coherent risk measure, while $\tvar$ is a coherent risk measure. Moreover, using $\tvar$ as a family of CRMs, we define the the acceptability index
\begin{align*}
\ait(D) := \sup \left\lbrace x \in (0, \infty) \; : \; \tvar_{\frac{1}{1+x}} (D) \leq 0  \right\rbrace. 
\end{align*}
It is easy to show that $\ait$ indeed is a CAI that is also law invariant, consistent with second order stochastic dominance,  and arbitrage and expectation consistent; for more details see \cite[Section~3.5]{ChernyMadan2009}. 
However, one notable drawback of $\ait$ is that it ignores the gains and only takes into account the tail corresponding to losses.

\item
The gain-to-loss ratio is a CAI, popular among practitioners, and defined as the ratio of the mean and the expectation of the negative tail, namely 
\begin{align*}
\glr(D) := \frac{(\mathbb{E}[D])^+}{\mathbb{E}[D^-]},
\end{align*}
where $D^- = \max\set{0,-D}, D^+=\max\set{D,0}$ and the convention $\frac{a}{0} := +\infty$ for all $a\geq 0$ is used. For additional properties of $\glr$ see, for instance, \cite[Section 3.2]{ChernyMadan2009}.  
We recall several representations of GLR, useful for our purposes. The representation of the form \eqref{AIRM} is given in terms of expectiles. 
We recall that the expectile $e_q(D)$ of a random variable $D$ at level $q \in (0,1)$ is defined by a first-order condition
\begin{align*}
q \mathbb{E}[(D - e_q(D))_+] = (1-q) \mathbb{E}[(D - e_q(D))_-],
\end{align*}
or as a minimizer of an asymmetric quadratic loss, see~\cite{BelliniBernardino15} for more details. The expectile-$\var$, 
$$
\evar_q (D) := -e_q(D),
$$
for $q\leq1/2$, is an increasing family of CRMs. One can show, for example by using that   $\mathcal{A}_{\evar_q} = \left\lbrace D \, \vert \, \frac{\mathbb{E}(D^+)}{\mathbb{E}(D^-)} \geq \frac{1-q}{q}  \right\rbrace$  is the acceptance set of $\evar_q$, that the representation \eqref{AIRM} for $\glr$ has the form 
\begin{align*}
\glr (D) := \sup \left\lbrace x \in (0, \infty) \; : \; \evar_{\frac{1}{2+x}} (D) \leq 0  \right\rbrace.
\end{align*}
Alternatively, one can use the system of supporting kernels corresponding to $\glr$, as well as the explicit form of the extreme measures, cf. \cite[Proposition~2]{ChernyMadan2009}. 
It is also clear that for $x > 0$,  $\glr(D) \geq x$ if, and only if, $\mathbb{E}[-D] + x \mathbb{E}[D^-] \leq 0$, which can be conveniently used for computation purposes. This is not linked to the robust representation \eqref{AIRM} since the mappings $D \mapsto \mathbb{E}[-D] + x \mathbb{E}[D^-]$ are not CRMs (for instance they are not translation invariant). 
Finally, we remark that there is another popular version of GLR, defined as $\overline \glr (D) = \frac{\mathbb{E}[D^+]}{\mathbb{E}[D^-]}$. This version of $\glr$ is also monotone, scale invariant and has the Fatou property, but lacks quasi-concavity, and thus $\overline{\glr}$ is not coherent. The two are connected via $\glr (D) = \max \{\overline{\glr} (D) - 1, 0 \}$.

\item
The risk-adjusted return on capital, similar to $\glr$, is a reward-risk type ratio, formally defined as
\begin{align}
\raroc(D) := \dfrac{(\mathbb{E}[D])^+}{(\pi(D))^+},
\label{eq:RAROC}
\end{align}
where $\pi$ is a fixed CRM. The corresponding family of CRMs is given by 
\begin{align*}
 \rho^x (D) = \min \Set{ \pi (D),  \ \frac{1}{1+x} \mathbb{E}[-D] + \frac{x}{1+x} \pi (D) } .
\end{align*}
For risk measures $\pi$ satisfying $\mathbb{E} [-D] \leq \pi (D)$ this simplifies to $\rho^x (D) =  \frac{1}{1+x} \mathbb{E}[-D] + \frac{x}{1+x} \pi (D)$. 
For more details see \cite[Section 3.4]{ChernyMadan2009}. In our numerical examples we will use $\raroc$ with $\pi = \tvar_{0.01}$, also known as the stable tail-adjusted return ratio (see, for instance,~\cite{MartinETAL03}). 
\end{enumerate}

In our numerical examples below, we maximize the acceptability index over the set of profits and losses that are possible to attain by investing in the market with $d$ (risky) assets. Without loss of generality, thanks to scale invariance of CAIs, we fix the initial investment to $1$. For numerical tractability, we assume that $\Omega$ is finite. 
The (gross or total) returns $S^j_1/S_0^j, \ j=1,\ldots,d$, of these $d$ assets are modeled as a matrix $R\in\bR^{d\times|\Omega|}$. Then, the sets of available profits and losses, with or without short-selling, become 
$$
\mathcal{D} = \{ \trans{R} h -1 \; : \; \trans{\mathbf{1}} h = 1 \} \ \text{ or } \  \mathcal{D} = \{ \trans{R} h -1  \; : \; \trans{\mathbf{1}} h = 1, h \geq 0 \}, 
$$
where $\mathbf{1}= \trans{(1,1,\ldots,1)}$. So, $h$ corresponds to the trading strategy (the amount invested in each asset) and $D(h)=\trans{R} h -1$ to the corresponding terminal P\&L.

To illustrate the main features of the proposed algorithm, we first consider a toy market model,  consisting of $d=2$ assets and  with returns given in Table~\ref{tab1}, Panel A. Generally speaking,  it is reasonable to select the input parameters such that that $\epsilon \leq \underline{\alpha}$. 
Panel B of Table~\ref{tab1} summarizes the iterations of the algorithm with the following input parameters:  the starting (seed) acceptability level set to $x_0 = 2$, the tolerance $\epsilon = 10^{-4}$ and the maximal number of iterations $\bar{M} = 15$ of Step 1. The algorithm outputs bounds on the maximal acceptability and an $\epsilon$-optimal solution $h^\epsilon$; see the last two rows of Table~\ref{tab1}, Panel B.  For each of the three acceptability indices, the optimal portfolio puts more weight on the first asset, with $\ait$ being the most balanced and $\raroc$ being the most extreme. This is because the first asset carries (in some sense) less risk, although at the cost of lower mean return than the second one. All three considered CAIs are loss based measures, but each in a different way. The $\ait$ measures how far and how deep into the tail the losses can go. The optimal position for $\ait$ balances the return in the second and the third state of the world. The $\glr$ treats loss directly through the expectation of effective losses (the negative part of the P\&L). Thus, the corresponding optimal position is in the range where the portfolio return is negative in only one state of the world. Since we are using $\tvar_{0.01}$ in defining the $\raroc$, only the worst-case scenario (state of the world) is considered, which is the reason why the corresponding optimal position relies heavily on the first asset, for which the worst-case loss is lower.  We also remark that in this market model, the short-selling constraints do not change the results.

For the sake of completeness, we also show the iteration of the modified algorithm outlined in Remark~\ref{remark_modified0}(iv). We use the fact that for each of the three indices -- $\ait, \  \glr, \ \raroc$ -- the corresponding risk measures are well-defined for the limiting parameter values $x = 0$ and $x = \infty.$ Since the bisection cannot be done on an interval of infinite length, we index the families of risk measures by a parameter $q = \frac{1}{1+x}$ on a bounded interval $[0,1],$ or, respectively by $q = \frac{1}{2+x}$ on $[0, 0.5].$ Then, the bisection is performed with respect to the parameter $q$. The iterations  for $\glr$ are presented in Table~\ref{tab3}, see the modified algorithm. This modification avoids the risk of failing to find a lower or upper bound for a badly chosen starting point $x_0$ (compare to Table~\ref{tab2}). Moreover, zero and infinite acceptability are often determined after solving two risk minimization problems, instead of $\bar{M}.$ On the other hand, one needs to treat the tolerance parameter $\epsilon$ carefully: although the bisection is performed on the parameter $q$, the termination criterion needs to be set on $x$ in order for the error not to be distorted (see Table~\ref{tab3}). A mixed version of the algorithm is also provided -- it switches to a bisection on the original parameter $x$, as soon as a finite upper bound is found. The iterations for $\glr$ are also given in Table~\ref{tab3}, see the mixed algorithm. In addition, we also make the following slight modification to the algorithm: at each iteration a risk minimization problem $P(x)$ is solved, finding its optimal solution $D^x$. If the considered level is found to be a lower bound then the maximal acceptability, then the position $D^x$ is used for updating the optimal solution of the acceptability maximization. Otherwise, it is not used at all. One can easily see that given a fixed position $D^x$, all levels $y$ satisfying $\rho^y (D^x) \leq 0$ are lower bounds on the maximal acceptability. Hence, if it is relatively easy to find a level $y$ such that $\rho^y (D^x) = 0$, then it can be used to update the lower bound. For the CAIs used in this example  such level $y$ can be found without any further optimization. We refer to this modification of the original algorithm as zero-level version.

We also run the proposed algorithm on a more realistic market model, consisting of  $d = 10$ stocks and $\vert \Omega \vert = 1000$ states of the world. The return matrix is obtained as draws from a multivariate Student's $t$-distribution. In Table~\ref{tab2} we report the results for various input parameters $x_0, \epsilon$ and $\bar{M}$. The results are intuitively clear, and expected: the distance of the initial guess $x_0$ from the true $\alpha^*$ affects the number of iterations needed to find the upper and lower bound (Step 1). The tolerance $\epsilon$ determines the number of iterations in Step~2.  We also note that for a badly chosen starting point the algorithm can fail to find a lower or an upper bound unless $\bar{M}$ is increased. We also note that the maximal acceptability differs in a market with and without short-selling, but the impact of the parameters is the same. 
Similar to the toy model, we list in Table~\ref{tab4} the results for different versions of the algorithm -- original one, modified, mixed and zero-level. 
We also present the results both with and without short-selling constraints. These results show that neither of the versions of the algorithm is performing strictly better than the others. Similar conclusions were observed for various other sets of parameters.

\section{Acceptability Maximization in the Dynamic Setting} \label{sec_dynamic}

In this section, we consider the acceptability maximization problem in a dynamic setting. We use the theory of dynamic coherent acceptability indices introduced in~\cite{BCZ2010} and their link to dynamic coherent risk measures. We briefly recall the key definitions and results from \cite{BCZ2010}, and then focus on acceptability maximization in the context of optimal investment in a multi-period market model. It turns out that the maximal acceptability is constant in a setting when the determining family of dynamic coherent risk measures is recursive and when the underlying market has a recursive structure. We conclude by considering the non-recursive case by focusing on two specific performance measures -- the dynamic risk-adjusted return on capital and the dynamic gain-to-loss ratio -- where we use the specific structure of the problem to introduce a solution scheme tailored to these performance measures.

\subsection{Dynamic CAIs and dynamic CRMs}
The concept of a coherent acceptability index was first extended to a dynamic setting in~\cite{BCZ2010} and consequently studied in \cite{BCDK2013,BCC2014,RosazzaGianinSgarra2012,BiaginiBion-Nadal2012}. 
A dynamic coherent acceptability index (DCAI) is meant to measure the performance of financial positions or instruments over time, accounting for the incoming flow of information. 
We start by briefly recalling the setup of \cite{BCZ2010}, where DCAIs are designed to measure the performance of (discounted) cash flows or dividend streams or unrealized P\&Ls.  Most of the properties from the static setup are naturally transferred to  the dynamic case. An addition is the time consistency property, which stays at the core of financial interpretations of DCAIs, but is also fundamentally used in establishing the dual representations. We refer to \cite{BCP2014,BCP2014a} for an in-depth discussion of various forms of time-consistency in decision making, in particular those arising in the theory of dynamic risk and performance measures. Following \cite{BCZ2010}, we take a discrete and finite state setting by denoting  $\cT := \{0,1,\ldots,T\}$  for some fixed $T\in\bN$, and letting $(\Omega, \mathcal{F}, \bF=(\mathcal{F}_s)_{s \in \cT}, \mathbb{P})$ be  a filtered probability space, with $\mathbb{P}$ having full support. We will write $\bE_t$ instead of the conditional expectation given $\cF_t$. Without loss of generality, we will assume that $\mathcal{F}_0$ is trivial. 
The dividend streams or unrealized P\&Ls will be modeled as $\bF$-adapted real-valued stochastic process $D = \{D_t\}_{t=0}^{T}$.  We will denote by $\bD$ the set of all such processes, and by $L_t = L_t(\Omega, \mathcal{F}_t, \mathbb{P})$ the $\mathcal{F}_t$-measurable random variables. As  usual, for $A\subset\Omega$ will $\1_A$ denote the indicator function which is equal to one for $\omega\in A$ and zero otherwise. 
Without loss of generality, we assume a zero interest rate, or, view $D\in\bD$ as discounted cash flows. Operations between random variables, such as minimum, maximum, product, or sum will be understood $\omega$-wise.

\begin{definition}\label{def_dAI}
A \textbf{dynamic coherent acceptability index (DCAI)}  is a function $\alpha: \cT \times \mathbb{D} \times \Omega \rightarrow [0, \infty]$ satisfying  for all times $t \in \cT,$ all cash flows $D, D' \in \mathbb{D},$ all events $A \in \mathcal{F}_t$, and all random variables $\lambda \in L_t$
\begin{enumerate}
\item[(A1)] Adaptiveness: $\alpha_t (D)$ is $\mathcal{F}_t$-measurable,
\item[(A2)] Independence of the past: if $\1_A D_s = \1_A D_s'$ for all $s \geq t$, then $\1_A \alpha_t(D) = \1_A \alpha_t(D')$,
\item[(A3)] Monotonicity: if $D_s \geq D_s'$ for all $s\geq t$, then $\alpha_t(D) \geq \alpha_t(D')$,
\item[(A4)] Scale invariance: $\alpha_t(\lambda D) = \alpha_t(D)$ for all $\lambda>0$,
\item[(A5)] Quasi-concavity: $\alpha_t(\lambda D + (1-\lambda)D') \geq \min\{ \alpha_t(D), \alpha_t(D') \}$ for $0 \leq \lambda \leq 1$
\item[(A6)] Translation invariance: $\alpha_t(D + m\1_{\{t\}}) = \alpha_t(D + m\1_{\{s\}})$ for any $m \in L_t$ and $s \geq t$, 
\item[(A7)] Dynamic consistency:  if $D_t \geq 0 \geq D_t'$ and there exists an $m \in L_t$ such that $\alpha_{t+1} (D) \geq m \geq \alpha_{t+1} (D')$, then $\alpha_{t} (D) \geq m \geq \alpha_{t} (D')$.
\end{enumerate}
A DCAI $\alpha$ is \textbf{normalized} if for all $t \in \cT, \ \omega \in \Omega$,  there exist  $D, D' \in \mathbb{D}$ such that $\alpha_t (D, \omega) = +\infty$ and $\alpha_t (D', \omega) = 0$. It is \textbf{right-continuous} if $\lim\limits_{c \to 0^+} \alpha_t (D + c \1_{\{t\}}, \omega) = \alpha_t (D, \omega)$ for any $t \in \cT, D \in \mathbb{D}, \omega \in \Omega$.
\end{definition}

As in the static case, DCAIs are closely related to dynamic coherent risk measures (DCRMs). 
\begin{definition}
\label{def_dRM}
A \textbf{dynamic coherent risk measure (DCRM)} is a function $\rho: \cT \times \mathbb{D} \times \Omega \rightarrow  \mathbb{R}$ satisfying for all times $t \in \cT,$ all cash flows $D, D' \in \mathbb{D},$ all states $\omega \in \Omega$, all events $A \in \mathcal{F}_t$, and all random variables $\lambda \in L_t$
\begin{enumerate}
\item[(R1)] Adaptiveness: $\rho_t (D)$ is $\mathcal{F}_t$-measurable,
\item[(R2)] Independence of the past: if $\1_A D_s = \1_A D_s'$ for all $s \geq t$, then $\1_A \rho_t(D) = \1_A \rho_t(D')$,
\item[(R3)] Monotonicity: if $D_s \geq D_s'$ for all $s \geq t$, then $\rho_t(D) \leq \rho_t(D')$,
\item[(R4)] Homogeneity: $\rho_t(\lambda D) = \lambda \rho_t(D)$ for all $\lambda>0$,
\item[(R5)] Subadditivity: $\rho_t(D + D') \leq \rho_t(D) + \rho_t(D')$
\item[(R6)] Translation invariance: $\rho_t(D + m\1_{\{s\}}) = \rho_t(D) - m$ for any $m \in L_t$ and $s \geq t$, 
\item[(R7)] Dynamic consistency: $\1_A \left( \min\limits_{\omega \in A} \rho_{t+1} (D, \omega) - D_t  \right) \leq \1_A \rho_t (D) \leq  \1_A \left( \max\limits_{\omega \in A} \rho_{t+1} (D, \omega) - D_t  \right).$
\end{enumerate}
A family of dynamic coherent risk measures $(\rho^x)_{x \in (0, \infty)}$ is called \textbf{increasing} if $x \geq y >0$ implies $\rho_t^x (D) \geq \rho_t^y (D)$ for any $t \in \cT, D \in \mathbb{D}$. It is \textbf{left-continuous} at $x_0>0$ if $\lim\limits_{x \to x_0^-} \rho^x_t (D, \omega) = \rho^{x_0}_t (D, \omega)$ for any $t \in \cT, D \in \mathbb{D}, \omega \in \Omega$.
\end{definition}

Originally, DCRMs were introduced in~\cite{Riedel2004}, although with a different (stronger) notion of time-consistency, which will be discussed in Section~\ref{sec_recursiveRM}. As  proved in \cite{BCZ2010}, there is a one-to-one relationship between a DCAI and an increasing family of DCRMs, similar to \eqref{AIRM}. Namely, the following assertions hold true:

\begin{enumerate}
\item For a normalized dynamic coherent acceptability index $\alpha$ the functions $(\rho^x)_{x \in (0, \infty)}$ defined as
\begin{align}
\label{AItoRM}
\rho_t^x (D, \omega) = \inf \{ c \in \mathbb{R} : \alpha_t(D + c\1_{\{t\}}, \omega) \geq x \}, .
\end{align}
form an increasing, left-continuous family of dynamic coherent risk measures.
\item For an increasing family of dynamic coherent risk measures $(\rho^x)_{x \in (0, \infty)}$ a function $\alpha$ defined as
\begin{equation}\label{RMtoAI}
\alpha_t (D, \omega) = \sup \{ x \in (0, \infty) : \rho_t^x (D, \omega) \leq 0 \}
\end{equation}
is a normalized, right-continuous dynamic coherent acceptability index. Moreover, there exists an increasing  sequence of sets of probability measures $\set{\cQ_t^x}_{t\in\cT, x\geq 0}$ such that 
\begin{equation}\label{eq:rep3}
\rho_t^x(D, \omega) = -\sup_{\bQ\in\cQ_t^x} \bE^{\bQ}_t \left[\sum_{s=t}^T D_s \right], \quad x>0, \ t\in\cT. 
\end{equation}
The converse implication is also true, under an additional technical property of time consistency of $\set{\cQ_t^x}_{t\in\cT}$, 
\item If $\alpha$ is a normalized, right-continuous dynamic coherent acceptability index, then there exists an increasing, left-continuous family of dynamic coherent risk measures $(\rho^x)_{x\in (0, \infty)}$, such that representation~\eqref{RMtoAI} holds. Vice versa, for an increasing, left-continuous family of dynamic coherent risk measures $(\rho^x)_{x\in (0, \infty)}$ there exists a normalized, right-continuous dynamic coherent acceptability index $\alpha$, such that~\eqref{AItoRM} holds.
\end{enumerate}

Similar to the static case, we are interested in finding the position with highest degree of acceptability. Given a set of available, or feasible, cash flows $\mathcal{D} \subseteq \mathbb{D}$, the problem of interest is
\begin{align}
\max\limits_{D \in \mathcal{D}} \alpha_0 (D). \label{DmaxAI}
\end{align}
It is straightforward to adapt Algorithm~\ref{alg} to the dynamic setup to solve the corresponding version of \eqref{DmaxAI} for a given $t\in\cT$ and $\omega\in\Omega$. However, this approach, generally speaking, is computationally not feasible. Usually one would aim to establish a recursive set of equations in the form of a dynamic programming  principle or Bellman's principle of optimality that would solve \eqref{DmaxAI}, which will be provided in the next section for the optimal investment problem.

\subsection{Optimal portfolio selection problem}
\label{sec_self-fin}

In this section, we consider the acceptability maximization problem in the context of optimal portfolio selection in a market model with  $d$ available assets.  We denote by $R_{s+1}=(R_{s+1}^1, \ldots, R_{s+1}^d)$ the vector of assets (total or gross) returns between time $s$ and time $s+1$, namely, if $S_s^j$ denotes the price of the $j$-th asset at time $s$, then $R_{s+1}^j := \frac{S_{s+1}^j}{S_s^j}$. We assume that $R_1, \dots, R_T$  are  independent and identically distributed on a probability space $(\Omega,\cF,\bP)$, and denote by $(\mathcal{F}_s)_{s \in \cT}$  the natural filtration generated by the process $(R_s)_{s=1,\ldots,T}$. In addition, we assume that all one step asset returns $R_s^j$ are strictly positive. Note that we implicitly assume that these assets do not pay dividends.

We assume that the investor starts with a positive initial wealth $V_0>0$, and invests it in the $d$ available assets by following a self-financing trading strategy, possibly with some additional trading constraints. A trading strategy is an adapted stochastic process $h=(h_s)_{s=0,\ldots,T-1}$ with $h_s = (h_s^1,\ldots,h_s^d)$, where $h_s^i$ is the monetary amount invested in asset $i$ between time $s$ and $s+1$.   
The portfolio value at time $s+1$ arising from the trading strategy $h$ is given by $V_{s+1} (h) = \trans{R_{s+1}} h_s$  for any $s = 0, \dots, T-1$.
We consider two feasible sets in particular, one with no trading constraints and one with short-selling constraints.
The set of all self-financing trading strategies with initial value $V_0$ is 
$$
\mathcal{H}_0 (V_0) := \{ (h_s)_{s = 0, \dots, T-1} \; \vert \; \trans{\mathbf{1}} h_s = V_s, V_{s+1} = \trans{R_{s+1}} h_s, s = 0, \dots, T-1 \}.
$$
Correspondingly, the set of feasible trading strategies with short-selling constraints is 
$$
\mathcal{H}_0^+ (V_0) = \{ (h_s)_{s = 0, \dots, T-1} \; \vert \;  h\in \cH_0(V_0), \  h_s^j \geq 0, s = 0, \dots, T-1, \ j=1,\ldots,d \}.
$$
The time $t$ feasible sets $\cH_t(V_t)$ and $\cH_t^+(V_t)$ are defined analogously. The next result shows that the feasible sets are positive homogeneous and recursive. 

\begin{restatable}{lemma}{lemmaPhi}
\label{lemma_W}
\begin{enumerate}
\item For a positive $\mathcal{F}_t$-measurable wealth $V_t$ the feasible sets scale as follows
\begin{align*}
 \mathcal{H}_t (V_t) = V_t \cdot \mathcal{H}_t (1)    \; \text{ and } \; \mathcal{H}_t^+ (V_t) = V_t \cdot \mathcal{H}_t^+(1).
\end{align*}

\item The feasible sets are recursive,
\begin{align*}
\mathcal{H}_t (V_t) &= \left\lbrace (h_s)_{s = t, \dots, T-1} \; \vert \; h_t \in H_t (V_t), (h_s)_{s = t+1, \dots, T-1} \in \mathcal{H}_{t+1} (\trans{R_{t+1}} h_t) \right\rbrace, \\
\mathcal{H}_t^+ (V_t) &= \left\lbrace (h_s)_{s = t, \dots, T-1} \; \vert \; h_t \in H_t^+ (V_t), (h_s)_{s = t+1, \dots, T-1} \in \mathcal{H}_{t+1}^+ (\trans{R_{t+1}} h_t) \right\rbrace, 
\end{align*}
where  $H_t (V_t) = \{ h_t \; \vert \; \trans{\mathbf{1}} h_t = V_t \}$ and $H_t^+ (V_t) = \{ h_t \in H_t(V_t) \; \vert \;  h_t \geq 0 \}$. 
\end{enumerate}
\end{restatable}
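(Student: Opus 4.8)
The plan is to prove both parts by working directly with the self-financing constraints, exploiting that they are homogeneous of degree one in the wealth and that they decouple across time; everything reduces to careful bookkeeping.

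For part (1), I would first unwind the definition of $\mathcal{H}_t(V_t)$: a strategy $(h_s)_{s=t,\ldots,T-1}$ belongs to it precisely when the associated wealth process, determined recursively by $V_{s+1} = \trans{R_{s+1}} h_s$ from the prescribed value $V_t$, satisfies $\trans{\mathbf{1}} h_s = V_s$ for every $s = t,\ldots,T-1$. Given $(h_s) \in \mathcal{H}_t(1)$ with associated wealth process $(W_s)$ (so $W_t = 1$), I would consider the scaled strategy $(V_t h_s)$ and show by induction on $s$ that its wealth process equals $(V_t W_s)$. The key point is that $V_t$ is $\mathcal{F}_t$-measurable, hence constant with respect to the later-stage $\omega$-wise operations, so it factors out of both $\trans{\mathbf{1}}(V_t h_s) = V_t \trans{\mathbf{1}} h_s$ and $\trans{R_{s+1}}(V_t h_s) = V_t \trans{R_{s+1}} h_s$, whence the constraint $\trans{\mathbf{1}}(V_t h_s) = V_t W_s$ matches the scaled wealth. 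This yields $V_t \cdot \mathcal{H}_t(1) \subseteq \mathcal{H}_t(V_t)$. For the reverse inclusion I would use $V_t > 0$ to divide: given $(h_s) \in \mathcal{H}_t(V_t)$, the strategy $(h_s/V_t)$ lies in $\mathcal{H}_t(1)$ by the same computation, so $(h_s) \in V_t \cdot \mathcal{H}_t(1)$. Since multiplication by the positive scalar $V_t$ preserves the sign constraints $h_s^j \geq 0$ $\omega$-wise, the identical argument gives the statement for $\mathcal{H}_t^+$.

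For part (2), I would argue that the recursion is merely a regrouping of the constraints defining $\mathcal{H}_t(V_t)$. Splitting off the index $s = t$, the constraint $\trans{\mathbf{1}} h_t = V_t$ is exactly the requirement $h_t \in H_t(V_t)$, while $V_{t+1} = \trans{R_{t+1}} h_t$ fixes the wealth carried into the next period. The remaining constraints, indexed by $s = t+1,\ldots,T-1$, namely $\trans{\mathbf{1}} h_s = V_s$ and $V_{s+1} = \trans{R_{s+1}} h_s$, are word-for-word the defining constraints of $\mathcal{H}_{t+1}(V_{t+1})$ with $V_{t+1} = \trans{R_{t+1}} h_t$. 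This establishes the set equality for $\mathcal{H}_t$; adding the $\omega$-wise sign constraints and splitting them the same way (into $h_t \geq 0$, i.e.\ $h_t \in H_t^+(V_t)$, and $h_s \geq 0$ for $s \geq t+1$, absorbed into $\mathcal{H}_{t+1}^+$) gives the version with short-selling constraints.

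Neither part presents a genuine obstacle; the content is essentially definition-chasing. The only points requiring care are the use of the $\mathcal{F}_t$-measurability of $V_t$ to factor it through the later $\omega$-wise products in part (1), and the positivity $V_t > 0$ needed to invert the scaling for the reverse inclusion.
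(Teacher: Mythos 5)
Your proposal is correct and follows essentially the same route as the paper: part (1) via linearity of the self-financing constraints in the wealth (the paper states this in one line, you spell out both inclusions), and part (2) by regrouping the defining constraints exactly as in the paper's displayed computation.
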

\begin{proof}
The proof is deferred to Appendix~\ref{AppendixB}.
\end{proof}

Our aim is to find the optimal trading strategy among the feasible ones by maximizing the portfolio's acceptability as measured by a given DCAI $\alpha$. We recall that the dynamic setup of \cite{BCZ2010} and \cite{Riedel2004} assumes that the inputs $D$ to a DCAI are (discounted) dividend processes, a setup usually convenient for pricing purposes or assessing the performance or riskiness of some dividend paying securities, or random future cash-flows (cf. \cite{AcciaioFollmerPenner2010,AcciaioPenner2010,BCIR2012,BCC2014} and references therein). When dealing with optimal investment (i.e. an optimal portfolio selection problem), traditionally and also more conveniently, one works with the value process or the (discounted) cumulative dividend process. 
Given a portfolio with value process $V = (V_s)_{s=0, \dots, T}$, the corresponding dividend stream $D = (D_s)_{s=0, \dots, T}$ is defined as
$$
D_s = V_s - V_{s-1}, \quad s = 1, \dots, T,
$$
and $D_0 = 0$. Thus, the cumulative P\&L up to time $t$ becomes $\sum_{s=0}^{t}D_s = V_t-V_0$. 
We refer the reader to \cite{AcciaioPenner2010,BCDK2013} for a detailed discussion on use of dividend streams and cumulative dividend streams within the general theory of assessment indices. 

We denote by $V(h)$ the wealth process generated by the trading strategy $h$ and $D(h)$ will stand for the corresponding dividend stream. In addition, for a given dividend stream $D = (D_0, \dots, D_T)$ we define the time $t$ tail dividend stream as $D^{[t, T]} := (0, \dots, 0, D_{t}, \dots, D_T)$, and we also put $D^{[t]} := (0, \dots, 0, D_{t}, 0,  \dots, 0)$.

The optimization problem we wish to solve at initial time is
\begin{align}
\label{ProbA}
\tag{$A_0 (V_0)$}
\max\limits_{h \in \mathcal{H}_0^+ (V_0)} \alpha_0 (D(h))
\end{align}
or the variant thereof, if short-selling is allowed, in which case the feasible set is $\mathcal{H}_0(V_0)$. By property (A2), independence of the past of DCAIs, we have that 
$\alpha_t (D) = \alpha_t (D^{[t, T]})$.
This would suggest that in order to solve~\ref{ProbA} we should consider the problems 
\begin{align}
\label{ProbAtx}
\tag{$A_t (V_t)$}
\max\limits_{h \in \mathcal{H}_t^+ (V_t)} \alpha_t (D^{[t, T]} (h)).
\end{align}
One certainly can study  \eqref{ProbAtx}, and try to establish a dynamic programming principle for this stochastic control problem, although, generally speaking, this problem is not time consistent. 
More importantly, from a practical point of view, including the change in portfolio value from time $t-1$ to $t$, namely the term $D_t=V_t-V_{t-1}$,  in the optimization criteria at time $t$ is less desirable. In particular, using \eqref{eq:rep3}, one would optimize at time $t$ a function that depends on $V_T-V_{t-1}$, rather than a function depending on total future return $V_T-V_t$. With this in mind, we introduce and focus our attention on a family of auxiliary acceptability maximization problems 
\begin{align}
\label{ProbAt}
\tag{$\tilde{A}_t (V_t)$}
\max\limits_{h \in \mathcal{H}_t^+ (V_t)} \alpha_t (D^{[t+1, T]} (h)), 
\end{align}
which are more in line with the setup from the optimal portfolio selection problem. 
Note that for any trading strategy $h$ the cash flows $D (h)$ and $D^{[1, T]} (h)$ coincide, and hence at the initial time the auxiliary problem $\tilde{A}_0 (V_0)$ is the same as the original problem~\ref{ProbA}. Therefore, solving the auxiliary family of problems, which we will address next, will lead to the solution of the original acceptability maximization problem.

\subsubsection{The Case of Recursive Risk Measures $(\rho_t^x)_{t\in\cT}$} \label{sec_recursiveRM}
As we already mentioned, the form of the time consistency property (R7) for DCRMs is tailored for the robust representation \eqref{RMtoAI} of DCAIs with the time consistency property (A7). This form of time consistency is weaker than the so-called strong time consistency of risk measures:  
\begin{itemize}
\item[(R7')] \textit{Strong time consistency:} for any $D,D'\in\bD$ and $t=0,\ldots,T-1$, if $D_t = D_t'$ and $\rho_{t+1} (D) = \rho_{t+1} (D'),$ then $\rho_{t} (D) = \rho_{t} (D')$. 
\end{itemize}
Strong time consistency (R7') is the one usually  associated with dynamic risk measures (cf. \cite{Riedel2004,BCP2014}), due to its natural financial interpretation, but also because of its equivalence to:
\begin{itemize}
	\item[(R7'')] \textit{Recursiveness:} $\rho_{t} (D) = \rho_t \left( -\rho_{t+1}(D)\1_{\{t+1\}} \right) - D_t$, for any $D\in\bD$ and $t=0,\ldots,T-1$.  	
\end{itemize}
One major benefit of having the recursive property is its direct applicability to stochastic control problems. This very property makes many stochastic control problems with risk as the terminal criteria to be time consistent. Note that such recursive property in principle can not be satisfied by any DCAI, cf. \cite{BCP2014a,BCP2014}.  

In this section, we study the acceptability maximization problem \eqref{ProbAt} assuming that the corresponding family of risk measures is strongly time consistent.  We work under the market setup of Section~\ref{sec_self-fin} with short-selling constraints and with an initial value $V_0>0$. 

The one-step risk measures generated by $\rho^x$ are defined as 
$$
\rho_{t,t+1}^x(Z,\omega):= \rho_t^x(\mathbf{0} + \1_{\{t+1\}}Z)(\omega), \quad t=0,\ldots,T-1, \ \omega\in\Omega,
$$
for any $\cF_{t+1}$-measurable random variable $Z$, here $\mathbf{0}$ denotes the zero process.  In what follows we assume that the one-step risk measures are identical across all nodes of the multinomial model. Namely, with $\cP_t$ denoting the partition of $\Omega$ that generates $\mathcal{F}_t$, we assume that for any $t,s\in\cT$, and any $\Omega_t\in\cP_t$ and $\Omega_s\in\cP_s$
\begin{equation}\label{eq:1stepRisk}
\rho_{t,t+1}^x(D_{t+1},\omega)  = \rho_{s,s+1}^x(D'_{s+1},\omega')
\end{equation} 
for all $\omega \in \Omega_t$, all $\omega' \in \Omega_s$ and all $D, D'$ satisfying $\1_{{\Omega}_t} D_{t+1} \overset{(d)}{=} \1_{\Omega_s} D'_{s+1} $ and zero otherwise. 
As previously, we denote the maximal acceptability attainable at the market as
\begin{align*}
\alpha^*_t (V_t; \omega) := \sup\limits_{h \in \mathcal{H}_t^+ (V_t)} \alpha_t (D^{[t+1, T]} (h); \omega).
\end{align*}

Under the above, what may appear, natural assumptions, we obtain a somehow surprising result: the maximal acceptability $\alpha^*$ is constant across wealth level, time and states of the world. 

\begin{theorem} \label{thm_alpha}
Let $\alpha$ be a normalized right-continuous DCAI and $(\rho^x)_{x \in (0, \infty)}$ be the corresponding family of DCRMs. Assume that for each $x>0$ the DCRM $\rho^x$ is strongly time consistent, and all the one step risk measures $\rho_{t,t+1}^x$ satisfy \eqref{eq:1stepRisk}.
Then, under the market model assumption of this section, the maximal acceptability $\alpha^*_t$ is independent of the wealth, time and state, that is, 
\begin{align*}
\alpha^*_t (V_t;\omega) = \alpha^*_0 (1),
\end{align*}
for all $t\in\cT$, $\omega \in \Omega$ and  positive $V_t \in L_t$.
\end{theorem}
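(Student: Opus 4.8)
The plan is to separate the wealth scaling from the time--state dependence. \emph{Stage 1 (wealth independence).} Fix $t\in\cT$ and $\omega\in\Omega$. By Lemma~\ref{lemma_W}(1) each $h\in\mathcal{H}_t^+(V_t)$ can be written as $h=V_t(\omega)\,h'$ with $h'\in\mathcal{H}_t^+(1)$, and the generated cash flows scale linearly, $D^{[t+1,T]}(h)=V_t(\omega)\,D^{[t+1,T]}(h')$. Since $V_t(\omega)>0$, scale invariance (A4) gives $\alpha_t(D^{[t+1,T]}(h);\omega)=\alpha_t(D^{[t+1,T]}(h');\omega)$, and taking suprema yields $\alpha^*_t(V_t;\omega)=\alpha^*_t(1;\omega)$. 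It therefore remains to show that $\alpha^*_t(1;\omega)$ is the same at every node and equals $\alpha^*_0(1)$.

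\emph{Stage 2 (reduction to a scalar backward recursion).} Using representation~\eqref{RMtoAI} and interchanging the two suprema as in the proof of Lemma~\ref{lemma_ax}, I pass to the risk--minimization values $p_t(x;\omega):=\inf_{h\in\mathcal{H}_t^+(1)}\rho_t^x(D^{[t+1,T]}(h);\omega)$, for which $\alpha^*_t(1;\omega)=\sup\{x>0:\,p_t(x;\omega)\le 0\}$. To simplify $p_t$, I apply recursiveness (R7$''$) together with $D^{[t+1,T]}_t(h)=0$, and then translation invariance (R6) at time $t+1$ to absorb $D_{t+1}=V_{t+1}-1$; this gives
\[
\rho_t^x\!\big(D^{[t+1,T]}(h)\big)=\rho_t^x\!\Big(\big[\,V_{t+1}-1-\rho_{t+1}^x\!\big(D^{[t+2,T]}(h)\big)\,\big]\1_{\{t+1\}}\Big).
\]
Since the feasible set is recursive (Lemma~\ref{lemma_W}(2)), the continuation strategy may be optimized first; monotonicity (R3) turns this into minimizing $\rho_{t+1}^x(D^{[t+2,T]}(h))$ node by node over $\mathcal{H}_{t+1}^+(V_{t+1})$, which by Lemma~\ref{lemma_W}(1) and homogeneity equals $V_{t+1}\,p_{t+1}(x;\cdot)$. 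Pulling the constant $-1$ out with (R6) and passing to the one-step risk measure leaves
\[
p_t(x;\omega)=1+\inf_{h_t\in H_t^+(1)}\rho_{t,t+1}^x\!\big(V_{t+1}\,[\,1-p_{t+1}(x;\cdot)\,]\,,\,\omega\big),\qquad V_{t+1}=\trans{R_{t+1}}h_t.
\]

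\emph{Stage 3 (solving the recursion).} I argue by backward induction that $p_t(x;\cdot)$ is constant in $\omega$; the base case $p_T(x)\equiv 0$ holds because $D^{[T+1,T]}=\mathbf{0}$ and $\rho_T^x(\mathbf{0})=0$. Assume $p_{t+1}(x;\cdot)\equiv p_{t+1}(x)$. The crucial observation is that
\[
g(x):=\inf_{h_t\in H_t^+(1)}\rho_{t,t+1}^x\!\big(\trans{R_{t+1}}h_t\,,\,\cdot\big)
\]
is independent of $t$ and $\omega$: this is exactly where the standing hypotheses enter, since the i.i.d.\ property makes the law of $\trans{R_{t+1}}h_t$ identical at every node while~\eqref{eq:1stepRisk} makes $\rho_{t,t+1}^x$ the same functional at every node. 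When $1-p_{t+1}(x)>0$, positive homogeneity (R4) factors this constant out and yields $p_t(x)=1+(1-p_{t+1}(x))\,g(x)$, again state independent. Writing $q_t:=1-p_t$ turns this into $q_t=-g(x)\,q_{t+1}$ with $q_T=1$, so $p_t(x)=1-(-g(x))^{T-t}$; hence, on the range $g(x)<0$, one has $p_t(x)\le 0$ if and only if $-g(x)\ge 1$, i.e. $g(x)\le -1$, uniformly in $t$. In the complementary range $g(x)\ge 0$ the factor $1-p_{t+1}(x)$ need not be positive, but a parallel induction gives $p_t(x)\ge 1$: if $1-p_{t+1}(x)\le 0$ then the argument of $\rho_{t,t+1}^x$ is nonpositive, and monotonicity (R3) with $\rho_{t,t+1}^x(0,\cdot)=0$ forces the infimum to be nonnegative. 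Thus $\{x:\,p_t(x)\le 0\}=\{x:\,g(x)\le -1\}$ for every $t$, whence $\alpha^*_t(1;\omega)=\sup\{x>0:\,g(x)\le -1\}=\alpha^*_0(1)$, which together with Stage~1 proves the claim.

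The main obstacle is the sign of the factor $1-p_{t+1}(x)$, which controls whether positive homogeneity (R4) may be invoked; the resolution above is to split the analysis at the threshold $g(x)=-1$ and to treat the range $g(x)\ge 0$ by the separate monotonicity induction $p_t(x)\ge 1$, so that the supremum defining $\alpha^*_t$ is governed solely by the crossing of $-g(x)$ through $1$, which is time independent. A secondary technical point is the dynamic analogue of Lemma~\ref{lemma_ax} used in Stage~2, namely the identity $\alpha^*_t(1;\omega)=\sup\{x:\,p_t(x;\omega)\le 0\}$ and the behaviour exactly at the threshold; here the minima are attained because $H_t^+(1)$ is a compact simplex and $\Omega$ is finite, so the static arguments transfer.
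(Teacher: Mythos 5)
Your proof is correct, but it takes a genuinely different route from the paper's. The paper argues directly at the level of \emph{sets} of achievable risks: it introduces $P^x_t(\omega)$ (risks of feasible tail dividend streams) and $Q^x_t(\omega)$ (risks of one-step-ahead dividends), shows via the i.i.d.\ returns and \eqref{eq:1stepRisk} that the $Q$-sets coincide across all nodes, and then runs a backward sign-propagation induction: for $y>\alpha^*$ every element of $P^y_t$ is bounded below (via recursiveness and monotonicity) by an element of $Q^y_t$, hence positive, while for $x<\alpha^*$ a nonpositive element of $P^x_{t+1}$ produces, again by monotonicity, an element of $P^x_t$ dominated by a nonpositive element of $Q^x_t$. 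You instead derive and \emph{solve} a scalar Bellman recursion for the minimal risk $p_t(x)$, obtaining $p_t(x)=1-(-g(x))^{T-t}$ on $\{g(x)<0\}$ and the uniform bound $p_t(x)\ge 1$ on $\{g(x)\ge 0\}$, and read off that $\{x:p_t(x)\le 0\}=\{x:g(x)\le -1\}$ is time- and state-independent. Your route is more informative -- it in effect proves the paper's closing remark of Section~3.3.1 that the minimal risk is positively homogeneous, recursive and of constant sign across times and states -- and your case split at the sign of $1-p_{t+1}(x)$ correctly handles the only place where positive homogeneity could fail to apply. The price is that you rely on two steps the paper's set-based argument sidesteps: the interchange of the outer risk measure with the inner infimum (which requires attainment of the continuation infima and the node-by-node decomposability of $\mathcal{H}^+_{t+1}(V_{t+1})$ over the atoms of $\mathcal{F}_{t+1}$; both hold here because $\Omega$ is finite and $H^+_t(1)$ is a compact simplex, as you note, but the decomposability deserves explicit mention), and the dynamic analogue of Lemma~\ref{lemma_ax} identifying $\alpha^*_t(1;\omega)$ with $\sup\{x:p_t(x;\omega)\le 0\}$, which again needs attainment. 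Neither is a genuine gap in this finite setting.
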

\begin{proof}
The proof is given in Appendix~\ref{AppendixB}.
\end{proof}

In view of Theorem~\ref{thm_alpha} the auxiliary acceptability maximization has a constant optimal objective value in time, and since at the initial time the auxiliary and the original problem coincide, we obtain that it suffices to solve $\tilde{A}_{T-1} (1)(\bar{\omega})$ for some $\bar{\omega} \in \Omega$ instead of~\ref{ProbA}. The next result shows how to construct the corresponding optimal trading strategy. 

\begin{theorem}\label{thm_alpha_solution}
Assume that for some $\bar{\omega} \in \Omega$ the supremum $\alpha^*_{T-1} (1; \bar{\omega})$ is attained and denote by $h^* \in \mathbb{R}^d$ the corresponding optimal position (given $\bar{\omega}$),
\begin{align*}
h^* = \arg\max\limits_{h_{T-1} \in H_{T-1}^+ (1)} \alpha_{T-1} (D^{[T]} (h_{T-1}); \bar{\omega}) (\bar{\omega}),
\end{align*}
where $H_{T-1}^+ (1)$ was defined in Lemma~\ref{lemma_W}(2).
Let  $(\bar{h}_s)_{s=0, \dots, T-1}$ be the trading strategy defined as
\begin{align*}
\bar{h}_0 &= V_0 \cdot h^*, \\
\bar{h}_s &= V_{s}(\bar{h}_{s-1}) \cdot h^*,  \quad  s=1, \dots, T-1.
\end{align*}
Then, the trading strategy $\bar{h}$ is an optimal solution of~\ref{ProbA}, i.e.~$\alpha_0 (D(\bar{h})) = \alpha^*$.
\end{theorem}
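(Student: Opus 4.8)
The plan is to verify that the explicitly constructed strategy $\bar{h}$ is feasible, that it replicates at every node the same optimal one-step behavior that achieves $\alpha^*_{T-1}(1;\bar\omega)$, and that by Theorem~\ref{thm_alpha} this local optimality propagates to global optimality of the initial problem. First I would check feasibility: by construction $\bar{h}_0 = V_0 h^*$ with $\trans{\mathbf 1} h^* = 1$ and $h^* \ge 0$, so $\trans{\mathbf 1} \bar{h}_0 = V_0$ and $\bar{h}_0 \ge 0$; inductively, since $V_{s}(\bar{h}_{s-1}) = \trans{R_s}\bar{h}_{s-1} > 0$ (strict positivity of returns) and $\bar{h}_s = V_s(\bar h_{s-1}) h^*$, we get $\trans{\mathbf 1}\bar{h}_s = V_s(\bar h_{s-1})$ and $\bar h_s \ge 0$, so $\bar h \in \mathcal H_0^+(V_0)$ by the recursive characterization in Lemma~\ref{lemma_W}(2).

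Next I would argue that $\bar h$ attains the maximal acceptability. By Theorem~\ref{thm_alpha} we know $\alpha^*_0(V_0) = \alpha^*_0(1) = \alpha^*$, and by scale invariance (A4) together with $\mathcal H_0^+(V_0) = V_0 \cdot \mathcal H_0^+(1)$ from Lemma~\ref{lemma_W}(1), it suffices to treat $V_0 = 1$. The key is that $\bar h$ rescales the fixed optimizer $h^*$ by the realized wealth at each node. Because the one-step risk measures satisfy the node-independence condition~\eqref{eq:1stepRisk} and each $\rho^x$ is strongly time consistent, the proof of Theorem~\ref{thm_alpha} shows that the supremum $\alpha^*_t(V_t;\omega)$ is achieved, at every node, precisely by investing according to the scaled one-step optimizer $h^*$; that is, $\alpha_{T-1}(D^{[T]}(\bar h_{T-1});\omega) = \alpha^*$ at every node of time $T-1$, using scale invariance to pass from wealth $V_{T-1}(\bar h_{T-2})$ to wealth $1$. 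One then invokes the dichotomy established in the proof of Theorem~\ref{thm_alpha}: since $\alpha^*_{t+1}$ is constant (equal to $\alpha^*$) on the continuation and the market is recursive, the recursive structure of $\rho^x$ forces $\alpha_t(D^{[t+1,T]}(\bar h);\omega) = \alpha^*$ to hold backward in $t$.

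I would formalize this backward step as the core of the argument. Fix $x < \alpha^*$. Recursiveness (R7'') gives $\rho^x_t(D) = \rho^x_t\bigl(-\rho^x_{t+1}(D)\1_{\{t+1\}}\bigr) - D_t$, and applied to $D^{[t+1,T]}(\bar h)$ (for which $D_t = 0$) this reduces the time-$t$ risk of the tail to a one-step risk of the time-$(t+1)$ risk. Since at each successor node $\rho^x_{t+1}(D^{[t+2,T]}(\bar h)) \le 0$ by the induction hypothesis (as $x < \alpha^* = \alpha^*_{t+1}$), and since the one-step optimizer $h^*$ achieves a nonpositive one-step risk by optimality at the terminal node and~\eqref{eq:1stepRisk}, monotonicity (R3) of the one-step measure yields $\rho^x_t(D^{[t+1,T]}(\bar h);\omega) \le 0$. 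As this holds for every $x < \alpha^*$, representation~\eqref{RMtoAI} gives $\alpha_t(D^{[t+1,T]}(\bar h);\omega) \ge \alpha^*$; the reverse inequality is immediate since $\alpha^*$ is the supremum. Taking $t=0$ and using $D(\bar h) = D^{[1,T]}(\bar h)$ yields $\alpha_0(D(\bar h)) = \alpha^*$, as claimed.

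The main obstacle I anticipate is the bookkeeping in the backward induction: one must carefully track how scale invariance converts the acceptability of the wealth-scaled strategy at a node back to the normalized problem at wealth $1$, and must ensure that~\eqref{eq:1stepRisk} genuinely licenses transporting the optimizer $h^*$ (computed at $\bar\omega$, time $T-1$, wealth $1$) to every other node, time and wealth level. The i.i.d.\ returns assumption and the homogeneity in Lemma~\ref{lemma_W}(1) are precisely what make this transport valid, but verifying that the one-step nonpositivity of $\rho^x$ under $h^*$ is preserved under the scaling (via positive homogeneity (R4)) and under the conditioning on distinct information sets is the delicate point that deserves care.
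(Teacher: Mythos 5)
Your proposal is correct and follows essentially the same route as the paper's proof: a backward induction showing $\rho^x_t(D^{[t+1,T]}(\bar h))\le 0$ for all $x<\alpha^*$, with the base case at $T-1$ handled via optimality of $h^*$, positive homogeneity, i.i.d.\ returns and the identical one-step risk measures, and the inductive step via recursiveness (R7'') plus monotonicity reducing the tail risk to the one-step risk $\rho^x_t(D^{[t+1]}(\bar h))\le 0$. The added feasibility check and the explicit appeal to the representation~\eqref{RMtoAI} at the end are just slightly more detailed versions of what the paper leaves implicit.
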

\begin{proof}
The proof is given in Appendix~\ref{AppendixB}.
\end{proof}

The results of this subsection rely on the properties of the family of risk measures $\set{(\rho^x_t)_{t=0, \dots, T}}_{x>0}$ corresponding to the DCAI $\alpha$ under consideration. Similar to the static case, one may be interested in the risk minimization problem corresponding to a fixed level $x>0$. As may be expected, the recursive setting of this section has direct implications on the minimal achievable risk (infimum of the risk minimization problem). It can be proved that the minimal risk is positively homogeneous and it also has a recursive form. Unlike the maximal acceptability it is not constant, but it maintains the same sign over all times and states. Furthermore, if an optimal solution (optimal trading strategy) exists, it can be constructed recursively in the spirit of Theorem~\ref{thm_alpha_solution}.

\subsubsection{The Case of Dynamic RAROC}\label{sec:dRAROC}
Using the definition of the static RAROC, the identity \eqref{eq:RAROC}, as well as the representation \eqref{eq:rep3}, one naturally defines the dynamic risk adjusted return on capital (dRAROC) as follows:
\begin{align*}
\text{dRAROC}_t (D) = 
\dfrac{\left( \mathbb{E}_t \left( \sum_{s=t}^T D_s  \right) \right)^+ }{\left( \pi_t \left( \sum_{s=t}^T D_s \right) \right)^+ }, \quad D\in\bD,
\end{align*}
with the convention $\frac{a}{0} = +\infty$, where $\pi$ is a given dynamic coherent risk measure (not to be confused with the family of DCRMs corresponding to an acceptability index).

As was shown in \cite[Section~6]{BCZ2010}, dRAROC fulfills the properties (A1)-(A6), but it, in general, fails to satisfy the dynamic consistency property (A7), and therefore, it is not a DCAI. Nevertheless, for some choices of $\pi$, dRAROC satisfies some weaker forms of time consistency. In particular, if $\pi$ is the dynamic version  of $\tvar$, then the corresponding dRAROC is so-called  semi-weakly acceptance time consistent, but not  semi-weakly rejection time consistent; for more details see \cite{BCP2014,BCP2014a}. This will be the example we consider in our numerical experiment below. With this in mind, we are interested in identifying an investment with the highest performance measured by dRAROC, i.e. the maximization of the dRAROC-performance in the framework of self-financing portfolios introduced in  Section~\ref{sec_self-fin}. Furthermore, we focus our attention only on the case of a feasible set with short-selling constraints $\cH^+_0 (V_0)$, but most of the results can be extended to the case with no trading constraints. Hence, we wish to solve the following optimization problem: 
\begin{align}\label{prob_dRAROC}
\max\limits_{h \in \mathcal{H}_0^+ (V_0)} \textrm{dRAROC}_0 (D (h)).
\end{align}
As already mentioned, this problem is time-inconsistent (in the sense of optimal control), and in view of the above it does not fit the framework of Section~\ref{sec_recursiveRM}. 

We will take the approach of \cite{KovacovaRudloff2019} to deal with time-inconsistency of \eqref{prob_dRAROC}. 
First we note that for a positive level $x > 0$ 
\begin{align*}
\textrm{dRAROC}_0 (D) \geq x \; \Leftrightarrow \; \min \left\lbrace \pi_0 \left( \sum_{s=0}^T D_s \right),  - \mathbb{E}_0 \left( \sum_{s=0}^T D_s \right) + x \pi_0 \left( \sum_{s=0}^T D_s \right) \right\rbrace \leq 0.
\end{align*}
Using this, one could apply the idea of Algorithm~\ref{alg} to the family  of functions $\rho^x_0 (\cdot) = \min \{ \pi_0 (\cdot),  -\mathbb{E}_0 (\cdot) + x \pi_0 (\cdot) \}$ for $x>0$. 
In the nutshell, the procedure would consist of the following: First, minimize the risk $\pi_0 (\cdot)$ among the feasible positions, i.e.~solve the mean-risk problem with an infinite risk aversion. If the optimal value were negative, an infinite performance measured by $\textrm{dRAROC}_0$ would be implied. Second, repeatedly minimize $-\mathbb{E}_0 (\cdot) + x \pi_0 (\cdot)$ among the feasible positions for various levels $x$ -- that is, solve the mean-risk problem for the risk aversion at various levels $x$.
Therefore, the algorithm would be iteratively computing elements of the mean-risk efficient frontier. If the (full) efficient frontier, i.e. the set of all portfolios that are not dominated in terms of their mean and risk, was available instead, the optimal solution of~\eqref{prob_dRAROC} could be found simply as the element of the frontier with the highest ratio of the mean to the risk. Of course, applying Algorithm~\ref{alg} is not computationally efficient in the dynamic setup, but it  motivates us to compute the efficient frontier, i.e. to consider
the bi-objective mean-risk problem
\begin{align}
\label{prob_mean-risk}
\min\limits_{h \in \mathcal{H}_0^+ (V_0)} 
\begin{pmatrix}
-&\mathbb{E}_0 (V_T (h) - V_0) \\
&\pi_0 (V_T (h) - V_0)
\end{pmatrix} \text{ w.r.t. } \leq_{\mathbb{R}^2_+},
\end{align}
where we used the fact that  $ \sum_{s=0}^T D_s (h) = V_T (h) - V_0$. This will also overcome the problem of time-inconsistency of~\eqref{prob_dRAROC} and thus lead to an efficient way to solve~\eqref{prob_dRAROC}. As it turns out, problem~\eqref{prob_mean-risk} is time consistent in the set-valued sense, i.e.,
the set-valued Bellman's principle of optimality recently proposed in~\cite{KovacovaRudloff2019} provides a way to solve the mean-risk problem~\eqref{prob_mean-risk} recursively, assuming that the dynamic risk measure $\pi$ is recursive, i.e. strongly time consistent. We emphasis that the recursiveness of $\pi$ does not imply the recursiveness of the members of the family $(\rho^x_0)_{x \in (0, \infty)}$ and is a separate property from the dynamic consistency of the performance measure $\textrm{dRAROC}$.  
The set-valued Bellman's principle of optimality of~\cite{KovacovaRudloff2019} also provides the intermediate mean-risk efficient frontiers, namely it solves the sequence of mean-risk problems
\begin{align*}
\min\limits_{h \in \mathcal{H}_t^+ (V_t)} 
\begin{pmatrix}
-&\mathbb{E}_t (V_T (h) - V_t) \\
&\pi_t (V_T (h) - V_t)
\end{pmatrix} \text{ w.r.t. } \leq_{L_t(\mathbb{R}^2_+)},
\end{align*}
for each time point $t =0, \dots, T-1$. Note that since $\textrm{dRAROC}_0$ is equal to the ratio of the mean to the risk, the element of the frontier of the time-consistent problem~\eqref{prob_mean-risk} with the highest ratio is the optimal solution of the time-inconsistent problem~\eqref{prob_dRAROC}. The same can be said about the intermediate mean-risk efficient frontiers and (auxiliary) problems $\max\limits_{h \in \mathcal{H}_t^+ (V_t)} \textrm{dRAROC}_t ( D^{[t+1, T]} (h))$.

We illustrate this on a dynamic version of the example from Section~\ref{sec_examples}. We consider the  market model with two assets, and with one-time-step asset returns $R^i_t, \ i=1,2$, having the probability law given in Panel A of Table~\ref{tab1}, and we take $T=6$. We take the DCRM $\pi$ to be the recursive dynamic  $\tvar$ at significance level $1\%$. We recall that the dynamic $\tvar$ is defined analogously to the static $\tvar$ by replacing $\var$ with the conditional $\var$, which in turn is defined as a conditional quantile.

In Figure~\ref{fig1} we display the mean-risk efficient frontier of problem~\eqref{prob_mean-risk}, as well as the intermediate frontiers. The bright green points correspond to the elements with the highest dRAROC.

\begin{figure}[h]
\centering
\includegraphics[width=\textwidth]{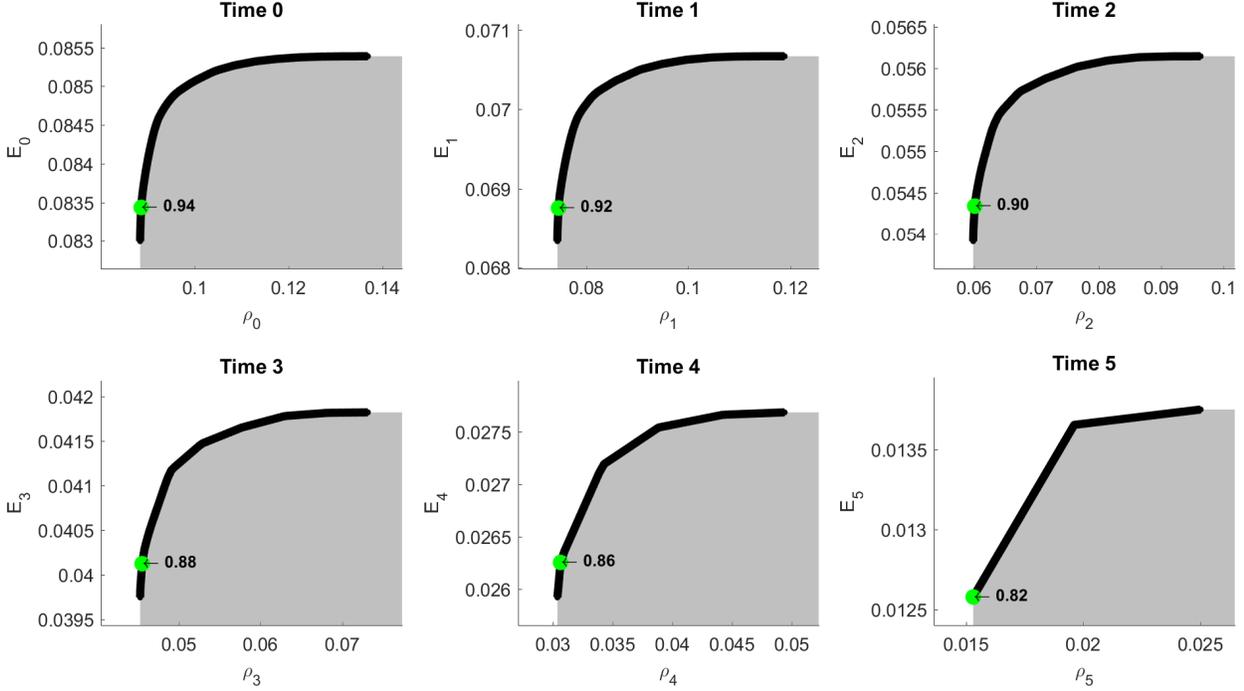}
\caption{Efficient frontiers (black) of the mean-risk problems and elements with the highest mean-to-risk ratio (green). All frontiers are depicted in the $(\rho, \mathbb{E})$ plane for the returns $v_T - v_t$ with $v_t = 1$.}
\label{fig1}
\end{figure}

The trading strategy $(h_t)_{t = 0, \dots, T-1}$ corresponding to the highest-$\textrm{dRAROC}_0$ element of the time $0$ frontier can be recovered from the solution of the mean-risk problem, see~\cite{KovacovaRudloff2019} for details. The mean-risk profiles, and the corresponding values of dRAROC, of this portfolio in the subsequent time points are determined by the strategy itself and vary over times and states of the world. They are depicted as yellow triangles for a selected state of the world $\omega$ in Figure~\ref{fig2}.
All of them lie on the efficient frontiers (yellow triangles), but, in general, do not coincide with the highest-$\textrm{dRAROC}_t$ element (bright green points). This confirms the time-inconsistency of dRAROC -- the strategy optimal from the viewpoint of time $t=0$ is not dRAROC-maximal at the subsequent time instances. 

For comparison, we include also a myopic (magenta square) and an inconsistent switching (red diamond) approach. In the myopic case, the investor at each time solves a one step optimization problem, hence looking always only one period ahead and chooses the position that maximizes the RAROC over this one-period horizon. The switching strategy represents a time inconsistent behavior in the sense that at time $t$  the $\textrm{dRAROC}$-maximal element of the (time consistent) frontier is selected, the  trading strategy $(h_s^{(t)})_{s = t, \dots, T-1}$ corresponding to it is found, and the position $h_t^{(t)}$ is taken. At the next time $t+1$  the previously found trading strategy $(h_s^{(t)})_{s = t+1, \dots, T-1}$ is discarded and a new one, $(h_s^{(t+1)})_{s = t+1, \dots, T-1}$ corresponding to the dRAROC-maximal element of the $t+1$ frontier, is selected. Since each (efficient) trading strategy is discarded after one time period, none of the corresponding (dRAROC-optimal) mean-risk profiles are ever realized. Figure~\ref{fig2} shows the actual means, risks and values of dRAROC that these behaviors yield. Clearly, neither the myopic nor the switching give at any time (except at $T-1$) the maximal performance. They even lead to portfolios, which are not mean-risk efficient at all, i.e. they do not lie on the frontier.

\begin{figure}[h]
\centering
\includegraphics[width=\textwidth]{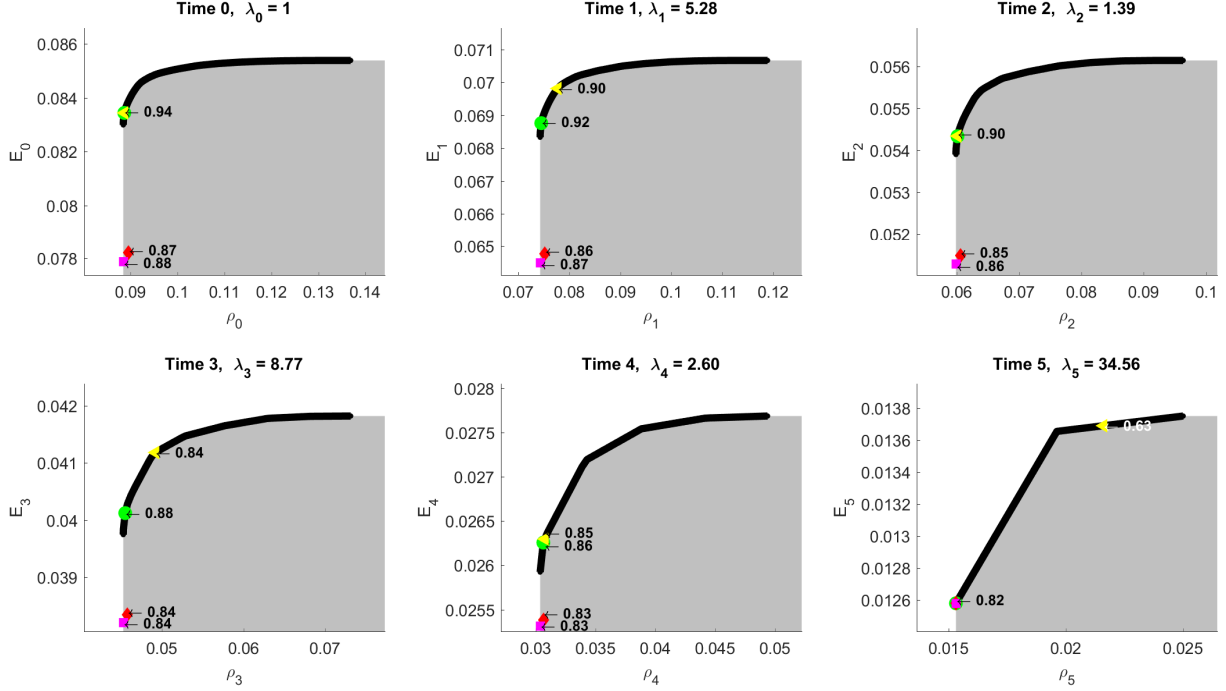}
\caption{Efficient frontiers for returns over time. The mean-risk profiles and the corresponding values of dRAROC are depicted for three trading strategies: the time consistent mean-risk strategy in one state $\omega$ (yellow triangle), the switching strategy (red diamond) and the myopic strategy (magenta square). The element of the frontier with the highest dRAROC is also depicted at each time (green circle).}
\label{fig2}
\end{figure}

Finally, let us look again at the strategy depicted in yellow, namely the strategy that solves \eqref{prob_dRAROC} at time zero. While this stochastic control problem is time-inconsistent, one can ask which objective does the optimal strategy maximize at the intermediate times. Note that the $\textrm{dRAROC}_0$-maximal element of the time $0$ frontier corresponds to a nonlinear scalarization
\begin{align*}
\begin{pmatrix}
-&\mathbb{E}_0 (V_T (h) - V_0) \\
&\rho_0 (V_T (h) - V_0)
\end{pmatrix} \; \mapsto \; \dfrac{\mathbb{E}_0 (V_T (h) - V_0)}{\rho_0 (V_T (h) - V_0)}.
\end{align*}
Thus, we will concentrate on a class of non-linear scalarizations including the one above. Specifically, we consider scalarizations of the time $t$ frontier of the form 
\begin{align}\label{eq_ratio}
\begin{pmatrix}
-&\mathbb{E}_t (V_T (h) - V_t) \\
&\rho_t (V_T (h) - V_t)
\end{pmatrix}
\; \mapsto \; 
\dfrac{\left( \mathbb{E}_t (V_T (h) - V_t) \right)^{\lambda_t}}{\rho_t (V_T (h) - V_t)},
\end{align}
where the mapping is fully determined by the value $\lambda_t$, which can be interpreted as a non-linear risk aversion parameter. For any given efficient trading strategy, one can compute the value of $\lambda_t$, such that the strategy is an optimal solution of a scalar problem with the objective~\eqref{eq_ratio}. This way, a sequence of $\lambda_0, \dots, \lambda_{T-1}$ can be computed for the $\textrm{dRAROC}_0$--optimal strategy $ (h_t)_{t=0, \dots, T-1}$ (represented on the frontiers by the yellow mean-risk pairs). Since the frontiers (and the mean-risk profiles) are adapted, also the corresponding scalarization coefficient $\lambda_t$ will be adapted. We computed the corresponding $\lambda_t$ in the given state of the world $\omega$ and depicted it also in Figure~\ref{fig2}.

Thus, the sequence of scalar problems~\eqref{eq_ratio} is time consistent in the usual sense for the computed risk aversion parameters $\lambda_0, \dots, \lambda_{T-1}$.
As $\lambda_0=1$  is by construction included, a time zero member of this time consistent family is the  $\textrm{dRAROC}_0$--maximization problem. Thus, an investor with a $\textrm{dRAROC}_0$ criteria at time zero and a $\textrm{dRAROC}_t$ like criteria, that differs only in a changed risk aversion parameter $\lambda_t$, where $\lambda_t$ is changing in a certain manner according to the changes in the stock market, would behave time consistent in the classical sense. This is in line with the findings about the moving scalarization (a time and state dependent risk aversion parameter) that leads to a time consistent problem and a time consistent behaviour of the investor as also discussed in the mean-risk portfolio optimization problem in \cite{KovacovaRudloff2019} and for other otherwise time inconsistent problems in~\cite{KarnamETAL17}.

\subsubsection{The Case of Dynamic GLR}\label{sec:dGLR}
Similar to dRAROC, the dynamic gain-to-loss ratio (dGLR) is defined as 
\begin{align}
\label{def_dGLR}
\text{dGLR}_t (D) = 
\dfrac{\left(\mathbb{E}_t \left( \sum_{s=t}^T D_s \right)\right)^+ }{\mathbb{E}_t \left( \left( \sum_{s=t}^T D_s \right)^- \right) }, \quad D\in\bD,
\end{align}
with the convention $\frac{a}{0} := +\infty.$ Unlike dRAROC, dGLR is a normalized and right-continuous DCAI (see \cite[Section~6]{BCZ2010}). Our aim is to identify among all self-financing portfolios the ones with the highest dGLR, that is to solve the problem
\begin{align}\label{prob_dGLR}
\max\limits_{h \in \mathcal{H}_0 (V_0)} \textrm{dGLR}_0 (D (h)).
\end{align}
Similar to the static GLR, the family $\rho^x$ of DCRM from the robust representation is identified by the conditional expectiles, and since the conditional expectiles are not strongly time consistent, the results of Subsection~\ref{sec_recursiveRM} do not apply here. As was also noted in the static case, instead of the corresponding family of risk measures one can consider the family $-\mathbb{E}_0 (\cdot) + x  \mathbb{E}_0 \left( \left( \cdot \right)^- \right)$ for $x>0$. Note that for any fixed time instance one can view the problem as a static one, and thus one can apply Algorithm~\ref{alg}, but this would be, as discussed before, computationally infeasible to do for all $t\in\cT$.  Here, with the intention of obtaining a Bellman's principle of optimality,  we take an approach inspired by the previous subsection and in the spirit of \cite{KovacovaRudloff2019}. Motivated by the numerator and denominator of~\eqref{def_dGLR}, we consider the bi-objective mean-loss problem
\begin{align}
\label{prob_mean-loss}
\min\limits_{h \in \mathcal{H}_0 (V_0)} 
\begin{pmatrix}
&-\mathbb{E}_0 (V_T(h) - V_0) \\
&\mathbb{E}_0 \left( \left( V_T(h) - V_0 \right)^- \right)
\end{pmatrix} \text{ w.r.t. } \leq_{\mathbb{R}^2_+}.
\end{align}
By the same argument as in the dRAROC case, the element of the efficient frontier with the highest ratio corresponds to the portfolio with the highest value of $\textrm{dGLR}_0$.
The recursive approach of~\cite{KovacovaRudloff2019}, unfortunately, can not be applied directly here, due to the lack of translation invariance of the objective function $\mathbb{E}_t \left(  X^- \right),$ which makes it impossible to express  $\mathbb{E}_0 \left( \left( V_T(h) - V_0 \right)^- \right)$  through $\mathbb{E}_t \left( \left( V_T(h) - V_t \right)^- \right)$. Nevertheless, to solve \eqref{prob_mean-loss}, we consider the following sequence of bi-objective problems
\begin{align}
\label{prob_mean-losst}
\min\limits_{h \in \mathcal{H}_t (V_t)} 
\begin{pmatrix}
&-\mathbb{E}_t (V_T(h) - V_0) \\
&\mathbb{E}_t \left( \left( V_T(h) - V_0 \right)^- \right)
\end{pmatrix} \text{ w.r.t. } \leq_{\mathbb{R}^2_+}\,,
\end{align}
where $V_0$ is the fixed initial wealth. Problem~\eqref{prob_mean-losst} does not have a natural interpretation as a mean-loss problem, unless $V_t = V_0$, however, it does give a recursive solution of~\eqref{prob_mean-loss} in terms of the set-valued Bellman's principle of~\cite{KovacovaRudloff2019}. 

We also note that the computational approach from~\cite{KovacovaRudloff2019} based on scaling arguments is not applicable here either, and therefore one needs to solve \eqref{prob_mean-losst} for any $V_t.$ As the problems~\eqref{prob_mean-losst} can be rewritten as bi-objective linear optimization problems and differ only in the right-hand side of the constraints, they form a class of parametric bi-objective linear problems with the parameter $V_t.$ We solved these parametric problems via polyhedral projection (cf. \cite{LoehneWeissing2016}).

We conclude this section by illustrating the solution to \eqref{prob_dGLR} in the same market model setup as in Section~\ref{sec:dRAROC} and by taking the initial wealth $V_0 = 0$. 
 Figure~\ref{fig3} contains the efficient frontier at time $t=0$ and the highest value of problem~\eqref{prob_dGLR} given by $\textrm{dGLR}_0=0.27$. As the intermediate frontiers are computed for all possible values of $V_t$, we depict for illustration only those frontiers corresponding to $V_t = V_0$ for each time point $t$.
 The case of the current wealth $V_t$ coinciding with the initial wealth $V_0$ would give problem~\eqref{prob_mean-losst} the interpretation as the mean-loss problem. Therefore the corresponding maximal value of $\textrm{dGLR}_t$ can be obtained. Since the zero-cost trading strategy can be scaled, the frontier is naturally a half-line. The highest value of dGLR corresponds to the slope of the frontier. The optimal trading strategy of~\eqref{prob_dGLR} can be deduced from the solution of~\eqref{prob_mean-losst}. Thus, an auxiliary, but time-consistent bi-objective problem~\eqref{prob_mean-losst} (following a backward recursion by the set-valued Bellman's principle of optimality) is used to compute the optimal solution of the time-inconsistent problem~\eqref{prob_dGLR}.

\begin{figure}[h]
\centering
\includegraphics[width=\textwidth]{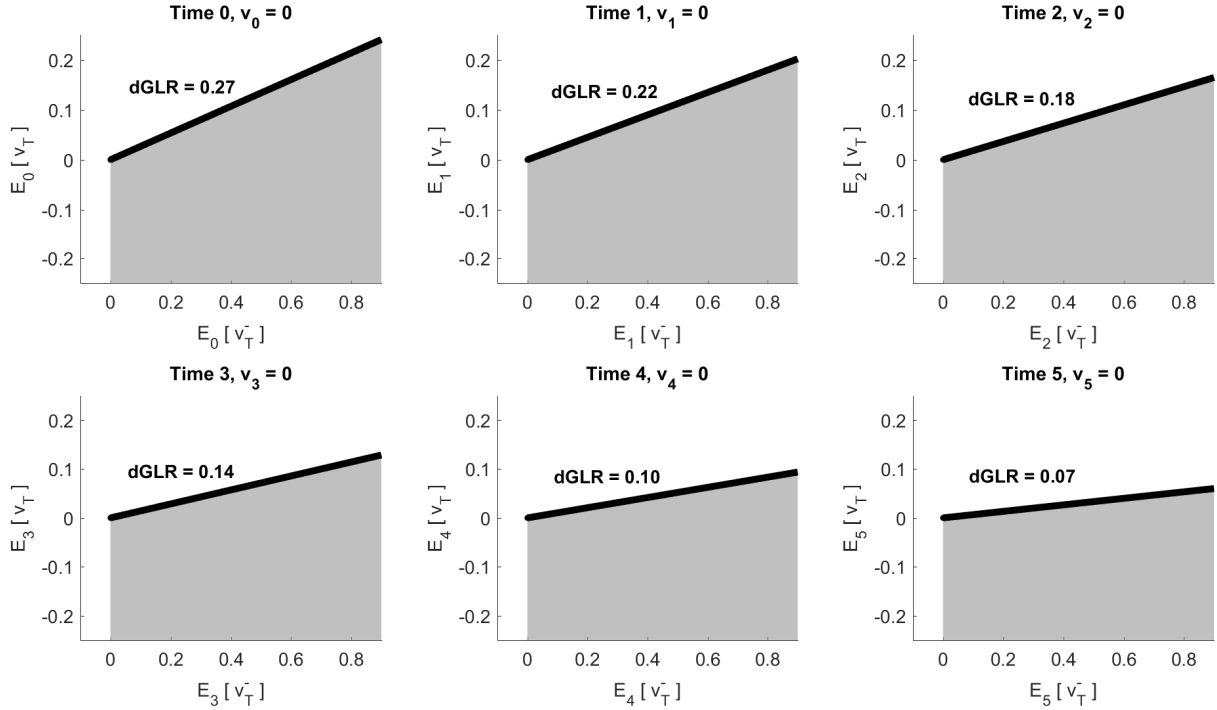}
\caption{Efficient frontiers (black) of the problems~\eqref{prob_mean-losst} depicted for wealth $V_t = 0.$ All frontiers are depicted in the $(\mathbb{E}_t(V_T^-), \mathbb{E}_t (V_T))$ plane. The corresponding highest value of $dGLR$ (the slope of the frontier) is given.}
\label{fig3}
\end{figure}

\section*{Acknowledgments}
IC acknowledges partial support from the National Science Foundation (US) grant DMS-1907568.
IC thanks the Vienna University of Economics and Business for the hospitality during a research visit  related to the workshop on dynamic multivariate programming in March 2018, where this project has been initiated.  The authors thank Christian Diem for helpful discussions in the early stages of the project.

\appendix
\section{Proofs from Section~\ref{sec_static}}
\label{AppendixA}
\begin{proof}[Proof of Lemma~\ref{lemma_D}]
The first property follows from the quasi-concavity of $\alpha$, the second from the definition of the sets, for the third consider
\begin{align*}
\bigcap\limits_{\epsilon > 0} \mathcal{D}^\epsilon &= \{ D \in \mathcal{D} \; :  \; \alpha(D) \geq \alpha^* - \epsilon, \;\; \forall \epsilon > 0 \} = \mathcal{D}^*.
\end{align*}
\end{proof}

\begin{proof}[Proof of Lemma~\ref{lemma_ax}]
\begin{enumerate}
\item The positive value of the risk minimization problem $p(x) > 0$ means that all portfolios have positive risk at level $x$ -  that is $ \rho^x(D) > 0$ for all $D \in \mathcal{D}$. Therefore all portfolios have acceptability at most $x$ - that is $\alpha(D) \leq x$ for all $D \in \mathcal{D}.$ Consequently, $\alpha^* \leq x.$ We do not obtain a strict inequality as we have no information about the continuity of the risk measure in the parameter $x.$ 

\item The assumption of attainment of the infimum implies that there exists $\tilde{D} \in \mathcal{D}$ such that $\rho^x (\tilde{D}) \leq 0.$ Then $\alpha(\tilde{D}) \geq x$ and $\alpha^* \geq x.$

\item The maximal acceptability $\alpha^*$ above $x$ means that there exists some portfolio $\tilde{D} \in \mathcal{D}$ with $\alpha(\tilde{D}) > x.$ From the monotonicity of the family of risk measures and~\eqref{AIRM} it follows that $\rho^y(\tilde{D})\leq 0$ for all $y \leq x$ and therefore $p(y) \leq 0.$

\item The maximal acceptability $\alpha^*$ below $x$ means that $ \alpha(D) < x$ for all $D \in \mathcal{D}.$ Consequently, for all $D \in \mathcal{D}$ it holds $\rho^x(D) > 0.$ Since by Assumption~\ref{assumption} the infimum of the risk minimization problem is attained, also $p(x) > 0.$ The same follows for all $y \geq x$ as the family of risk measures is increasing.
\end{enumerate}
\end{proof}

\begin{proof}[Proof of Lemma~\ref{lemma_alg}]
\begin{enumerate}
\item Let $\alpha^* < \underline{\alpha}$. With the halving update rule, at iteration $n$ (counting from $0$) the tested value is $x_{n} =2^{-n} \cdot x_0$ and by Lemma~\ref{lemma_ax} at each step $p(x_n) > 0.$ Therefore after $\bar{M}$ iterations the algorithm is terminated and no non-zero lower bound on the acceptability is found. The portfolio $\bar{D}$ is never assigned, as no portfolio with a known lower bound on the acceptability is found.

\item Let $\alpha^* > \overline{\alpha}$. With the doubling update rule, at iteration $n$ (counting from $0$) the tested value is $x_{n} =2^{n} \cdot x_0$ and by Lemma~\ref{lemma_ax} at each step $p(x_n) \leq 0.$ Therefore, after $\bar{M}$ iterations the algorithm is terminated and no finite upper bound on the acceptability is found. The optimal solution $D^{x_{\bar{M}-1}}$ of the risk minimization problem $P(x_{\bar{M}-1})$ is outputted as $\bar{D}$. It has a degree of acceptability of at least $\overline{\alpha} = x_{\bar{M}-1}$.

\item (a)  According to Lemma~\ref{lemma_ax}  for $\alpha^* \in (\underline{\alpha}, \overline{\alpha})$ it holds $p(\underline{\alpha}) \leq 0 < p(\overline{\alpha}),$ therefore Step 1 of Algorithm~\ref{alg} identifies both lower and upper bound, $x_L$ and  $x_U.$ Lemma~\ref{lemma_ax} also guarantees that the found values are true bounds, $x_L \leq \alpha^* \leq x_U.$ Step 2 continues until the length of the interval is sufficiently small. 

(b) The optimal solution to the risk minimization problem $P(x_L)$ is returned as $\bar{D}$. By Assumption~\ref{assumption} it holds $\rho^{x_L} (\bar{D}) \leq 0,$ so $\alpha(\bar{D}) \geq x_L.$ From part (a) it follows that $x_L > \alpha^* -\epsilon,$ so $\bar{D} \in \mathcal{D}^\epsilon.$ 

(c) The worst-case scenario for the length of the interval after Step 1 is $x_L = x_0 \cdot 2^{\bar{M}-2}, x_U = x_0 \cdot 2^{\bar{M}-1}.$ Since the bisection step decreases the length of the interval by half, after $i$ bisection iterations the length of the interval would be $x_0 \cdot 2^{\bar{M}-2-i}$. To obtain a length below $\epsilon$ we need $i > \log_2 \frac{x_0}{\epsilon} + \bar{M} - 2$.
\end{enumerate}
\end{proof}

\begin{proof}[Proof of Lemma~\ref{lemma_convergence}]
By Lemma~\ref{lemma_alg} the algorithm for the tolerance $\epsilon$ outputs $D^\epsilon \in \mathcal{D}^\epsilon.$ Since $\mathcal{D}$ is compact, there is a subsequence  $\{ D^{\epsilon_{nk}} \}_{nk \in \N}$ with a limit, denoted $\tilde{D}$, in the feasible set. The compactness also implies there exists $C < \infty$ such that $\vert D \vert \leq C$ for all feasible positions $D \in \mathcal{D}$. Then, since $\vert \frac{1}{C} D^{\epsilon_{nk}} \vert \leq 1$ and $\frac{1}{C} D^{\epsilon_{nk}} \xrightarrow{p} \frac{1}{C} \tilde{D},$ scale invariance and the Fatou property of $\alpha$ imply for any fixed $\delta > 0$
\begin{align*}
\forall \epsilon_{nk} \leq \delta: \;\; \alpha (D^{\epsilon_{nk}}) \geq \alpha^* - \delta   \;\; \Rightarrow \;\; 
\alpha (\tilde{D}) \geq \alpha^* - \delta.
\end{align*}
Letting $\delta$ go to zero, we obtain $\alpha(\tilde{D}) \geq \alpha^*$. Therefore, $\tilde{D}$ is an element of the set $\mathcal{D}^*.$
\end{proof}

\section{Proofs from Section~\ref{sec_dynamic}}
\label{AppendixB}
\begin{proof}[Proof of Lemma~\ref{lemma_W}]
For the first part, positive homogeneity follows from the self-financing property and the linearity of the portfolio value $V_s.$ As for the second part, the recursiveness, we have 
\begin{align*}
\mathcal{H}_t (V_t) := \{ (h_s)_{s = t, \dots, T-1} \; \vert \;& \trans{\mathbf{1}} h_s = V_s, V_{s+1} = \trans{R_{s+1}} h_s, s = t, \dots, T-1 \} \\
= \{ (h_s)_{s = t, \dots, T-1} \; \vert \;& \trans{\mathbf{1}} h_t = V_t, V_{t+1} = \trans{R_{t+1}} h_t, \\
& \trans{\mathbf{1}} h_s = V_s, V_{s+1} = \trans{R_{s+1}} h_s, s = t+1, \dots, T-1 \} \\
= \{ (h_s)_{s = t, \dots, T-1} \; \vert \;& h_t \in H_t(V_t),  (h_s)_{s = t+1, \dots, T-1} \in \mathcal{H}_{t+1} (\trans{R_{t+1}} h_t) \}.\\
\end{align*}
The result for the set $\mathcal{H}_t^+ (V_t)$ is obtained similarly. 
\end{proof}

\begin{proof}[Proof of Theorem~\ref{thm_alpha}]
First we note that $\alpha^*$ is scale invariant, i.e. $\alpha^*_t(\lambda V_t) = \alpha^*_t(V_t)$ for any $\lambda >0, \lambda\in\cF_t, t\in\cT$, which follows immediately from the scale invariance of the DCAI $\alpha$ and Lemma~\ref{lemma_W}(1). Thus, it is enough to prove that $\alpha^*_0(1) = \alpha_t^*(1; \omega)$, for all $t\in\cT$ and $\omega\in\Omega$, which we will show next.

To prove the second claim we, use two types of sets of risks: the set of risks of one-step-ahead dividends,
\begin{align*}
Q^x_t (\omega) := \{ \rho^x_t (D^{[t+1]} (h_t); \omega) \; \vert \; h_t \in H_t^+ (1)  \},
\end{align*}
and the set of risks of feasible portfolios,
\begin{align*}
P^x_t (\omega) := \{ \rho^x_t (D^{[t+1, T]} (h); \omega) \; \vert \; h \in \mathcal{H}_t^+ (1)  \}.
\end{align*}
The fact that the asset returns are iid and the assumption that the one-step risk measures are identical imply that at a given level $x$ the sets of one-step-ahead risks coincide across all times and all states,
\begin{align}
\label{eqB5}
Q^x_t (\omega_1) = Q^x_s (\omega_2) \text{ for all } s, t \in \cT \text{ and all } \omega_1, \omega_2 \in \Omega.
\end{align}
At time $T-1$ the two types of sets of risks for a given level $x$ coincide, $Q^x_{T-1} (\omega) = P^x_{T-1} (\omega)$.
The relationship between the acceptability index $\alpha$ and the corresponding family of risk measures $(\rho^x)_{x \in (0, \infty)}$ implies the following two equivalence:
\begin{align}
\begin{split}
\label{eqB6}
\alpha^*_t (1; \omega) &\leq \beta \; \Leftrightarrow \; \forall y > \beta: \; P^y_t (\omega) \cap \mathbb{R}_- = \emptyset, \text{ and} \\
\alpha^*_t (1; \omega) &\geq \beta \; \Leftrightarrow \; \forall x < \beta: \; P^x_t (\omega)  \cap \mathbb{R}_- \neq \emptyset.
\end{split}
\end{align}

We prove the claim by a backward induction.  Let $\bar{\omega} \in \Omega$ be an arbitrary state of the world and set $\alpha^* := \alpha^*_{T-1} (1; \bar{\omega})$. In the first step of the induction we prove that $\alpha^*_{T-1} (1; \omega) = \alpha^*$ for all states $\omega \in \Omega$: Consider a level $y > \alpha^*$. According to~\eqref{eqB6} the set $P^y_{T-1} (\bar{\omega}) = Q^y_{T-1} (\bar{\omega})$ contains positive elements only. Then,~\eqref{eqB5} implies that the same is true for the set $P^y_{T-1} (\omega) = Q^y_{T-1} (\omega)$, which means $\alpha^*_{T-1} (1; \omega) \leq \alpha^*$. Now consider a level $x < \alpha^*$. According to~\eqref{eqB6} the set $P^x_{T-1} (\bar{\omega}) = Q^x_{T-1} (\bar{\omega})$ contains some non-positive element. By~\eqref{eqB5}, the same is true for the set $P^x_{T-1} (\omega) = Q^x_{T-1} (\omega)$, so $\alpha^*_{T-1} (1; \omega) \geq \alpha^*$.

The induction hypothesis assumes that $\alpha^*_{s} (1) \equiv \alpha^*$ for all $s > t$. For levels $y > \alpha^*$ this means that the sets $P^y_{t+1} (\omega)$ and the sets $Q^y_t (\omega)$ (via~\eqref{eqB5} and the first step of the induction) contain positive elements only. For levels $x < \alpha^*$ this means that the sets $P^x_{t+1} (\omega)$ and the sets $Q^x_t (\omega)$ (via~\eqref{eqB5} and the first step of the induction) contain some non-positive element. The adaptiveness (R1) and independence (R2) of the risk measure imply that there exists an element $\bar{p} \in P^x_{t+1}$ that is non-positive in all states of the world. The same is true also for the set $Q^x_t$.

Inductive step: The properties of the risk measure imply the following form of the set $P^x_t$,
\begin{align*}
P^x_t = \left\lbrace \rho^x_t \left( - \left( V_{t+1} (h_t) \cdot p - D_{t+1} (h_t) \right) 1_{\{t+1\}} \right) \; \vert \; h_t \in H_t^+ (1), p \in P^x_{t+1} \right\rbrace.
\end{align*}
Consider a level $y > \alpha^*$. According to the induction hypothesis all $p \in P^y_{t+1}$ are positive. Then, by applying the monotonicity (R3), an arbitrary element of $P^y_t$ can be bounded by
\begin{align*}
\rho^y_t \left( - \left( V_{t+1} (h_t) \cdot p - D_{t+1} (h_t) \right) 1_{\{t+1\}} \right) \geq \rho^y_t \left(  D_{t+1} (h_t) 1_{\{t+1\}} \right) = \rho^y_t \left(  D^{[t+1]} (h_t) \right).
\end{align*}
The risk $\rho^y_t \left(  D^{[t+1]} (w_t) \right)$ is an element of the set $Q^y_t$, so by the induction hypothesis it is positive in all states of the world. This shows $\alpha^*_{t} (1) \leq \alpha^*$.

Now consider a level $x < \alpha^*$. Consider the elements of $P^x_t$ of the form
\begin{align*}
\left\lbrace \rho^x_t \left( - \left( V_{t+1} (h_t) \cdot \bar{p} - D_{t+1} (h_t) \right) 1_{\{t+1\}} \right) \; \vert \; h_t \in H_t^+ (1) \right\rbrace,
\end{align*}
where $\bar{p} \leq 0$ is a non-positive element of $P^x_{t+1}$, whose existence is guaranteed by the induction hypothesis. Monotonicity of the risk measure bounds these risks by
\begin{align*}
\rho^x_t \left( - \left( V_{t+1} (h_t) \cdot \bar{p} - D_{t+1} (h_t) \right) 1_{\{t+1\}} \right) \leq \rho^x_t \left(  D_{t+1} (h_t) 1_{\{t+1\}} \right) = \rho^x_t \left(  D^{[t+1]} (h_t) \right).
\end{align*}
The risks $\rho^x_t \left(  D^{[t+1]} (w_t) \right)$ are elements of the set $Q^x_t$, and by the induction hypothesis at least one of them is non-positive. Therefore, the set $P^x_t$ contains at least one non-positive element and $\alpha^*_{t} (1) \geq \alpha^*$.
\end{proof}

\begin{proof}[Proof of Theorem~\ref{thm_alpha_solution}]
Firstly, note that the construction of the trading strategy $\bar{h}$ guarantees that it is adapted and feasible. We prove the claim via backward induction by showing that 
\begin{align*}
\rho^x_t (D^{[t+1, T]} (\bar{h})) \leq 0 \text{ for all } x < \alpha^*.
\end{align*}
This suffices to show that $\alpha_t (D^{[t+1, T]} (\bar{h})) \geq  \alpha^*$. Since $\alpha^*$ is a supremum, equality follows.

Consider time $T-1$. Optimality of the position $h^*$ and the positive homogeneity imply that 
\begin{align*}
\rho^x_{T-1} (D^{[T,T]} (\bar{h}); \bar{\omega}) \leq 0
\end{align*}
for all levels $x < \alpha^*$. The iid asset returns and the identical one-step risk measures together with the positive homogeneity give the same for all states $\omega \in \Omega$. 

The induction hypothesis assumes that the risk
\begin{align*}
\rho^x_{t+1} (D^{[t+2,T]} (\bar{h})) \leq 0 \text{ for all } x < \alpha^*.
\end{align*}
For the inductive step we use the recursiveness of the risk measure to express the time $t$ risk as
\begin{align*}
\rho^x_t (D^{[t+1,T]} (\bar{h})) = \rho^x_t \left( - \left( \rho^x_{t+1} (D^{[t+2, T]} (\bar{h}) ) - D_{t+1} (\bar{h}) \right) 1_{\{t+1\}} \right).
\end{align*}
At level $x < \alpha^*$ the induction hypothesis and the monotonicity provide a bound
\begin{align*}
\rho^x_t (D^{[t+1, T]} (\bar{h})) \leq \rho^x_t (D^{[t+1]} (\bar{h})) \leq 0.
\end{align*}
The inequality $\rho^x_t (\bar{D}^{t+1} (\bar{h})) \leq 0$ follows again from the iid asset returns, identical one-step risk measures, the positive homogeneity and the strategy $\bar{h}$ corresponding to the scaled position $h^*$. We conclude $\alpha_t (D^{[t+1, T]} (\bar{h})) \geq \alpha^*$.
\end{proof}

%
%
%
%
%
\begin{sidewaystable}
\center
\caption{Algorithm~\ref{alg} for $\ait$, $\glr$ and $\raroc$ in a toy market model. \label{tab1}}
\begin{tabular}{c | c c c c}
		& $\omega_1$		& $\omega_2$		& $\omega_3$		& $\omega_4$ \\ \hline
Asset 1	& $1.04 $			& $1.045$		& $0.98$		& $0.985$ \\
Asset 2	& $1.045$			& $0.975$			& $1.055$		& $0.98$ 
\end{tabular}
\subcaption*{Panel A: Return matrix $R$ in the toy market model (two assets and four states of the world).}
\bigskip
\begin{tabular}{c c c c c   c   c c c c c   c   c c c c c}
\cline{1-5} \cline{7-11} \cline{13-17}
\multicolumn{5}{c}{AIT}                               & $\;$ & \multicolumn{5}{c}{GLR}                               & $\;$ & \multicolumn{5}{c}{RAROC}                            \\
Iter  & $x_L$      & $x_U$      & $x$       & $p(x)$  &      & Iter  & $x_L$      & $x_U$      & $x$       & $p(x)$  &      & Iter  & $x_L$      & $x_U$      & $x$       & $p(x)$ \\ \cline{1-5} \cline{7-11} \cline{13-17} 
\multicolumn{5}{c}{Step 1}                            &      & \multicolumn{5}{c}{Step 1}                            &      & \multicolumn{5}{c}{Step 1}                           \\ \cline{1-5} \cline{7-11} \cline{13-17} 
$1$   & $0$        & $\infty$   & $2$       & +       &      & $1$   & $0$        & $\infty$   & $2$       & $-$     &      & $1$   & $0$        & $\infty$   & $2$       & +      \\
$2$   & $0$        & $2$        & $1$       & +       &      & $2$   & $2$        & $\infty$   & $4$       & +       &      & $2$   & $0$        & $2$        & $1$       & +      \\ \cline{7-11}
$3$   & $0$        & $1$        & $0.5$     & $-$     &      & \multicolumn{5}{c}{Step 2}                            &      & $3$   & $0$        & $1$        & $0.5$     & $-$    \\ \cline{1-5} \cline{7-11} \cline{13-17} 
\multicolumn{5}{c}{Step 2}                            &      & $1$   & $2$        & $4$        & $3$       & $-$     &      & \multicolumn{5}{c}{Step 2}                           \\ \cline{1-5} \cline{13-17} 
$1$   & $0.5$      & $1$        & $0.75$    & $-$     &      & $2$   & $3$        & $4$        & $3.5$     & +       &      & $1$   & $0.5$      & $1$        & $0.75$    & $-$    \\
$2$   & $0.75$     & $1$        & $0.875$   & +       &      & $3$   & $3$        & $3.5$      & $3.25$    & +       &      & $2$   & $0.75$     & $1$        & $0.875$   & +      \\
$3$   & $0.75$     & $0.875$    & $0.8125$  & +       &      & $4$   & $3$        & $3.25$     & $3.125$   & $-$     &      & $3$   & $0.75$     & $0.875$    & $0.8125$  & $-$    \\
$4$   & $0.75$     & $0.8125$   & $0.7813$  & +       &      & $5$   & $3.125$    & $3.25$     & $3.1875$  & +       &      & $4$   & $0.8125$   & $0.875$    & $0.8438$  & +      \\
$5$   & $0.75$     & $0.7813$   & $0.7656$  & +       &      & $6$   & $3.125$    & $3.1875$   & $3.1563$  & +       &      & $5$   & $0.8125$   & $0.8438$   & $0.8281$  & +      \\
$6$   & $0.75$     & $0.7656$   & $0.7578$  & $-$     &      & $7$   & $3.125$    & $3.1563$   & $3.1406$  & $-$     &      & $6$   & $0.8125$   & $0.8281$   & $0.8203$  & $-$    \\
$7$   & $0.7578$   & $0.7656$   & $0.7617$  & $-$     &      & $8$   & $3.1406$   & $3.1563$   & $3.1484$  & +       &      & $7$   & $0.8203$   & $0.8281$   & $0.8242$  & +      \\
$8$   & $0.7617$   & $0.7656$   & $0.7637$  & $-$     &      & $9$   & $3.1406$   & $3.1484$   & $3.1445$  & +       &      & $8$   & $0.8203$   & $0.8242$   & $0.8223$  & +      \\
$9$   & $0.7637$   & $0.7656$   & $0.7647$  & $-$     &      & $10$  & $3.1406$   & $3.1445$   & $3.1426$  & $-$     &      & $9$   & $0.8203$   & $0.8223$   & $0.8213$  & $-$    \\
$10$  & $0.7647$   & $0.7656$   & $0.7651$  & $-$     &      & $11$  & $3.1426$   & $3.1445$   & $3.1436$  & +       &      & $10$  & $0.8213$   & $0.8223$   & $0.8218$  & +      \\
$11$  & $0.7651$   & $0.7656$   & $0.7654$  & +       &      & $12$  & $3.1426$   & $3.1436$   & $3.1431$  & +       &      & $11$  & $0.8213$   & $0.8218$   & $0.8215$  & +      \\
$12$  & $0.7651$   & $0.7654$   & $0.7653$  & $-$     &      & $13$  & $3.1426$   & $3.1431$   & $3.1428$  & $-$     &      & $12$  & $0.8213$   & $0.8215$   & $0.8214$  & $-$    \\
$13$  & $0.7653$   & $0.7654$   & $0.7653$  & $-$     &      & $14$  & $3.1428$   & $3.1431$   & $3.1429$  & +       &      & $13$  & $0.8214$   & $0.8215$   & $0.8215$  & +      \\ \cline{1-5} \cline{13-17} 
      & $0.76532$  & $0.76538$  &           &         &      & $15$  & $3.1428$   & $3.1429$   & $3.1429$  & +       &      &       & $0.82141$  & $0.82147$  &           &        \\ \cline{1-5} \cline{7-11} \cline{13-17} 
\multicolumn{5}{c}{$h^\epsilon = (55.17\%, 44.83\%)$} &      &       & $3.14282$  & $3.14288$  &           &         &      & \multicolumn{5}{c}{$h^\epsilon = (93.75\%, 6.25\%)$} \\ \cline{7-11}
      &            &            &           &         &      & \multicolumn{5}{c}{$h^\epsilon = (73.33\%, 26.67\%)$} &      &       &            &            &           &           
\end{tabular}
\subcaption*{Panel B: Iterations of Algorithm~\ref{alg} with input parameters $x_0 = 2, \epsilon = 10^{-4}$ and $\bar{M} = 15$. The last two rows give, respectively, the bounds $x_L$ and $x_U$ on the maximal acceptability, and an $\varepsilon$-optimal portfolio.}
\end{sidewaystable}

\begin{sidewaystable}
 \center
\caption{Iterations of the modified, the mixed and the zero-level version of Algorithm~\ref{alg} for $\glr$ in the market model from Table~\ref{tab1}, Panel~A ($\alpha^* = 3.1428$) with the tolerance $\epsilon = 10^{-4}$.
In the modified version the bisection is performed on the parameter $q = \frac{1}{2+x} \in [0, 0.5]$ after verifying the signs of $p(0)$ and $p(\infty)$. The termination criterion is set on the parameter $x$ to guarantee an $\epsilon$-solution is obtained. With the termination criterion on the parameter $q$ the algorithm would finish after $13$ iteration of Step 2, however, the interval for maximal acceptability would have length 1.6e-03. The mixed version switches to a bisection on the parameter $x$ as soon as a finite upper bound $x_U$ is obtained. -the zero-level version computes after each iteration the level $y$ for which the portfolio solving the risk minimization problem has zero risk. This level is used as a lower bound. The algorithm is run with initial parameters $x_0 = 2, \epsilon = 10^{-4}$ and $\bar{M} = 15$.\label{tab3}}

\small
\begin{tabular}{cccccccccccccccccccc}
\cline{1-6} \cline{8-13} \cline{15-20}
\multicolumn{6}{c}{Modified algorithm for GLR}                   & $\;$ & \multicolumn{6}{c}{Mixed algorithm for GLR}               & $\;$ & \multicolumn{6}{c}{Zero-level algorithm for GLR}                                                                        \\
Iter   & $q_L$     & $q_U$     & $q$       & $x$       & $p(x)$  &      & Iter & $q_L$    & $q_U$    & $q$      & $x$      & $p(x)$ &      & Iter & $x_L$                & $x_U$                & $x$                  & $y$                  & $p(x)$               \\ \cline{1-6} \cline{8-13} \cline{15-20} 
\multicolumn{6}{c}{Step 1}                                       &      & \multicolumn{6}{c}{Step 1}                                &      & \multicolumn{6}{c}{Step 1}                                                                                              \\ \cline{1-6} \cline{8-13} \cline{15-20} 
       &           &           & $0$       & $\infty$  & +       &      &      &          &          & $0$      & $\infty$ & +      &      & $1$  & $0$                  & $\infty$             & $2$                  & $3.1429$             & $-$                  \\
       &           &           & $0.5$     & $0$       & $-$     &      &      &          &          & $0.5$    & $0$      & $-$    &      & $2$  & $3.1429$             & $\infty$             & $6.2857$             & $3.1429$             & +                    \\ \cline{1-6} \cline{8-13} \cline{15-20} 
\multicolumn{6}{c}{Step 2}                                       &      & \multicolumn{6}{c}{Step 2}                                &      & \multicolumn{6}{c}{Step 2}                                                                                              \\ \cline{1-6} \cline{8-13} \cline{15-20} 
$1$    & $0$       & $0.5$     & $0.25$    & $2$       & $-$     &      & $1$  & $0$      & $0.5$    & $0.25$   & $2$      & $-$    &      & $1$  & $3.1429$             & $6.2857$             & $4.7143$             & $3.1429$             & +                    \\
$2$    & $0$       & $0.25$    & $0.125$   & $6$       & +       &      & $2$  & $0$      & $0.25$   & $0.125$  & $6$      & +      &      & $2$  & $3.1429$             & $4.7143$             & $3.9286$             & $3.1429$             & +                    \\ \cline{8-13}
$3$    & $0.125$   & $0.25$    & $0.1875$  & $3.3333$  & +       &      & Iter & $x_L$    & $x_U$    & $x$      &          & $p(x)$ &      & $3$  & $3.1429$             & $3.9286$             & $3.5357$             & $3.1429$             & +                    \\ \cline{8-13}
$4$    & $0.1875$  & $0.25$    & $0.2188$  & $2.5714$  & $-$     &      & $3$  & $2$      & $6$      & $4$      &          & +      &      & $4$  & $3.1429$             & $3.5357$             & $3.3393$             & $3.1429$             & +                    \\
$5$    & $0.1875$  & $0.2188$  & $0.2031$  & $2.9231$  & $-$     &      & $4$  & $2$      & $4$      & $3$      &          & $-$    &      & $5$  & $3.1429$             & $3.3393$             & $3.2411$             & $3.1429$             & +                    \\
$6$    & $0.1875$  & $0.2031$  & $0.1953$  & $3.1200$  & $-$     &      & $5$  & $3$      & $4$      & $3.5$    &          & +      &      & $6$  & $3.1429$             & $3.2411$             & $3.1920$             & $3.1429$             & +                    \\
$7$    & $0.1875$  & $0.1953$  & $0.1914$  & $3.2245$  & +       &      & $6$  & $3$      & $3.5$    & $3.25$   &          & +      &      & $7$  & $3.1429$             & $3.1920$             & $3.1674$             & $3.1429$             & +                    \\
$8$    & $0.1914$  & $0.1953$  & $0.1934$  & $3.1717$  & +       &      & $7$  & $3$      & $3.25$   & $3.125$  &          & $-$    &      & $8$  & $3.1429$             & $3.1674$             & $3.1551$             & $3.1429$             & +                    \\
$9$    & $0.1934$  & $0.1953$  & $0.1943$  & $3.1457$  & +       &      & $8$  & $3.125$  & $3.25$   & $3.1875$ &          & +      &      & $9$  & $3.1429$             & $3.1551$             & $3.1490$             & $3.1429$             & +                    \\
$10$   & $0.1943$  & $0.1953$  & $0.1948$  & $3.1328$  & $-$     &      & $9$  & $3.125$  & $3.1875$ & $3.1563$ &          & +      &      & $10$ & $3.1429$             & $3.1490$             & $3.1459$             & $3.1429$             & +                    \\
$11$   & $0.1943$  & $0.1948$  & $0.1946$  & $3.1393$  & $-$     &      & $10$ & $3.125$  & $3.1563$ & $3.1406$ &          & $-$    &      & $11$ & $3.1429$             & $3.1459$             & $3.1444$             & $3.1429$             & +                    \\
$12$   & $0.1943$  & $0.1946$  & $0.1945$  & $3.1425$  & $-$     &      & $11$ & $3.1406$ & $3.1563$ & $3.1484$ &          & +      &      & $12$ & $3.1429$             & $3.1444$             & $3.1436$             & $3.1429$             & +                    \\
$13$   & $0.1943$  & $0.1945$  & $0.1944$  & $3.1441$  & +       &      & $12$ & $3.1406$ & $3.1484$ & $3.1445$ &          & +      &      & $13$ & $3.1429$             & $3.1436$             & $3.1432$             & $3.1429$             & +                    \\
$14$   & $0.1944$  & $0.1945$  & $0.1944$  & $3.1433$  & +       &      & $13$ & $3.1406$ & $3.1445$ & $3.1426$ &          & $-$    &      & $14$ & $3.1429$             & $3.1432$             & $3.1430$             & $3.1429$             & +                    \\
$15$   & $0.1944$  & $0.1945$  & $0.1944$  & $3.1429$  & +       &      & $14$ & $3.1426$ & $3.1445$ & $3.1436$ &          & +      &      & $15$ & $3.1429$             & $3.1430$             & $3.1430$             & $3.1429$             & +                    \\ \cline{15-20} 
$16$   & $0.1944$  & $0.1945$  & $0.1945$  & $3.1427$  & $-$     &      & $15$ & $3.1426$ & $3.1436$ & $3.1431$ &          & +      &      & \multicolumn{6}{c}{$(x_L, x_U) = (3.14286, 3.14295)$}                                                                   \\
$17$   & $0.1944$  & $0.1945$  & $0.1944$  & $3.1428$  & $-$     &      & $16$ & $3.1426$ & $3.1431$ & $3.1428$ &          & $-$    &      & \multicolumn{6}{c}{$h^\epsilon = (73.33\%, 26.67\%)$}                                                                   \\
$18$   & $0.1944$  & $0.1944$  & $0.1944$  & $3.1429$  & $-$     &      & $17$ & $3.1428$ & $3.1431$ & $3.1429$ &          & +      &      &      & \multicolumn{1}{l}{} & \multicolumn{1}{l}{} & \multicolumn{1}{l}{} & \multicolumn{1}{l}{} & \multicolumn{1}{l}{} \\ \cline{1-6}
\multicolumn{6}{c}{$(x_L, x_U) = (3.14285, 3.14290)$}            &      & $18$ & $3.1428$ & $3.1429$ & $3.1429$ &          & +      &      &      & \multicolumn{1}{l}{} & \multicolumn{1}{l}{} & \multicolumn{1}{l}{} & \multicolumn{1}{l}{} & \multicolumn{1}{l}{} \\ \cline{8-13}
\multicolumn{6}{c}{$q_U - q_L =$ 1.9e-06, $x_U - x_L =$ 5.0e-05} &      & \multicolumn{6}{c}{$(x_L, x_U) = (3.14282, 3.14288)$}     &      &      & \multicolumn{1}{l}{} & \multicolumn{1}{l}{} & \multicolumn{1}{l}{} & \multicolumn{1}{l}{} & \multicolumn{1}{l}{} \\
\multicolumn{6}{c}{$h^\epsilon = (73.33\%, 26.67\%)$}            &      & \multicolumn{6}{c}{$h^\epsilon = (73.33\%, 26.67\%)$}     &      &      & \multicolumn{1}{l}{} & \multicolumn{1}{l}{} & \multicolumn{1}{l}{} & \multicolumn{1}{l}{} & \multicolumn{1}{l}{}
\end{tabular}

\end{sidewaystable}

\begin{table}
	\center
	\caption{The behavior of Algorithm~\ref{alg} for various input parameters in a market model with $d=10$ assets with short-selling constraints.\label{tab2}}
	
	\bigskip
	\subcaption*{Panel A: $\ait$, maximal acceptability $\alpha^* = 25.45$.}
	\begin{tabular}{lccccccccc}
		\hline
		\multirow{2}{*}{$x_0$} & \multirow{2}{*}{$\epsilon$} & \multirow{2}{*}{$M$} & $\;\;$ & \multicolumn{2}{c}{Step 1} & $\;\;$ & \multicolumn{2}{c}{Step 2}    & Run time \\
		&                             &                      &        & Iter   & $[x_L, x_U]$      &        & Iter       & $x_U - x_L$      & (s)      \\ \hline
		$2$                    & $10^{-4}$                   & $15$                 &        & 5      & $[16, 32]$        &        & 18         & 6.1e-05          & 3.78     \\
		$20$                   & $10^{-4}$                   & $15$                 &        & 2      & $[20, 40]$        &        & 18         & 7.6e-05          & 3.40     \\
		$200$                  & $10^{-4}$                   & $15$                 &        & 4      & $[25, 50]$        &        & 18         & 9.5e-05          & 3.56     \\ \hline
		$2$                    & $10^{-8}$                   & $15$                 &        & 5      & $[16, 32]$        &        & 31         & 7.5e-09          & 6.22     \\ \hline
		$2^{20}$               & $10^{-4}$                   & $15$                 &        & 15     & $[0, 64]$        &        & \multicolumn{2}{c}{no Step 2} & 1.88     \\
		$2^{20}$               & $10^{-4}$                   & $30$                 &        & 17     & $[16, 32]$        &        & 18         & 6.1e-05          & 4.67 \\
		$2^{-10}$              & $10^{-4}$                   & $15$                 &        & 15     & $[16, \infty]$   &        & \multicolumn{2}{c}{no Step 2} & 4.61 \\
		$2^{-10}$              & $10^{-4}$                   & $30$                 &        & 16     & $[16, 32]$        &        & 18         & 6.1e-05          & 7.15     \\ \hline
	\end{tabular}

	\bigskip
	\subcaption*{Panel B: $\glr$, maximal acceptability $\alpha^* = 279.62$.}
	\begin{tabular}{l c c   c   c c   c   c c  c}
		\hline
		\multirow{2}{*}{$x_0$} & \multirow{2}{*}{$\epsilon$} & \multirow{2}{*}{$M$} & $\;\;$ & \multicolumn{2}{c}{Step 1} & $\;\;$ & \multicolumn{2}{c}{Step 2}    & Run time \\
		&                             &                      &        & Iter   & $[x_L, x_U]$      &        & Iter       & $x_U - x_L$      & (s)      \\ \hline
		$2$                    & $10^{-4}$                   & $15$                 &        & 9      & $[256, 512]$      &        & 22         & 6.1e-05          & 21.53    \\
		$20$                   & $10^{-4}$                   & $15$                 &        & 5      & $[160, 320]$      &        & 21         & 7.6e-05          & 17.27    \\
		$200$                  & $10^{-4}$                   & $15$                 &        & 2      & $[200, 400]$      &        & 21         & 9.5e-05          & 15.74    \\ \hline
		$2$                    & $10^{-8}$                   & $15$                 &        & 9      & $[256, 512]$      &        & 35         & 7.5e-09          & 30.40    \\ \hline
		$2^{25}$               & $10^{-4}$                   & $15$                 &        & 15     & $[0, 2048]$       &        & \multicolumn{2}{c}{no Step 2} & 6.50     \\
		$2^{25}$               & $10^{-4}$                   & $30$                 &        & 18     & $[0.5,1]$         &        & 22         & 6.1e-05          & 23.91    \\
		$2^{-10}$              & $10^{-4}$                   & $15$                 &        & 15     & $[16, \infty]$   &        & \multicolumn{2}{c}{no Step 2} & 13.41 \\
		$2^{-10}$              & $10^{-4}$                   & $30$                 &        & 20     & $[256, 512]$      &        & 22         & 6.1e-05          & 30.27    \\ \hline
	\end{tabular}

	\bigskip
	\subcaption*{Panel C: $\raroc$, maximal acceptability $\alpha^* = 279.62$.}
	\begin{tabular}{lccccccccc}
		\hline
		\multirow{2}{*}{$x_0$} & \multirow{2}{*}{$\epsilon$} & \multirow{2}{*}{$M$} & $\;\;$ & \multicolumn{2}{c}{Step 1} & $\;\;$ & \multicolumn{2}{c}{Step 2}    & Run time \\
		&                             &                      &        & Iter   & $[x_L, x_U]$      &        & Iter       & $x_U - x_L$      & (s)      \\ \hline
		$2$                    & $10^{-4}$                   & $15$                 &        & 2      & $[2, 4]$          &        & 15         & 6.1e-05          & 7.20     \\
		$20$                   & $10^{-4}$                   & $15$                 &        & 4      & $[2.5, 5]$        &        & 15         & 7.6e-05          & 9.41     \\
		$200$                  & $10^{-4}$                   & $15$                 &        & 8      & $[1.56, 3.13]$    &        & 14         & 9.4e-05          & 12.84    \\ \hline
		$2$                    & $10^{-8}$                   & $15$                 &        & 2      & $[2, 4]$          &        & 28         & 7.5e-09          & 11.07    \\ \hline
		$2^{20}$               & $10^{-4}$                   & $15$                 &        & 15     & $[0,64]$          &        & \multicolumn{2}{c}{no Step 2} & 10.55    \\
		$2^{20}$               & $10^{-4}$                   & $30$                 &        & 20     & $[2, 4]$          &        & 15         & 6.1e-05          & 19.59    \\
		$2^{-15}$              & $10^{-4}$                   & $15$                 &        & 15     & $[0.5, \infty]$   &        & \multicolumn{2}{c}{no Step 2} & 7.04     \\
		$2^{-15}$              & $10^{-4}$                   & $30$                 &        & 18     & $[2, 4]$          &        & 15         & 6.1e-05          & 13.41    \\ \hline
	\end{tabular}

\end{table}

\begin{table}
\center
\caption{A comparison of the different versions of the algorithm in a market with $d = 10$ assets and $\vert \Omega \vert = 1000$ states of the world both with and without short-selling. A tolerance $\epsilon = 10^{-4}$ is used for all algorithms, the original and zero-level version use $x_0 = 2$ and $\bar{M} = 15$. Obtaining the final approximation $[x_L, x_U]$ is denoted in the table by $\alpha^*$, values are listed to two decimal places. \label{tab4}}

\bigskip
\subcaption*{Panel A: $\ait$, the maximal acceptability with short-selling constraints ($h \geq 0$) is $\alpha^* = 25.45,$ without short-selling constraints ($h$ free) it is $\alpha^* =25.72$.}
\begin{tabular}{cccccccccc}
\hline
                            & \multirow{2}{*}{Algorithm} & \multicolumn{2}{c}{Step 1} & \multicolumn{2}{c}{Bisection on $q$} & \multicolumn{2}{c}{Bisection on $x$} & $x_U - x_L$ & Run time \\
                            &                            & Iter   & $[x_L, x_U]$      & Iter          & $[x_L, x_U]$         & Iter          & $[x_L, x_U]$         &             & (s)      \\ \hline
\multirow{4}{*}{$h \geq 0$} & Original                   & 5      & $[16, 32]$        &               &                      & 18            & $\alpha^*$           & 6.1e-05     & 3.32     \\
                            & Modified                   & 2      & $[0, \infty]$     & 23            & $\alpha^*$           &               &                      & 8.3e-05     & 3.67     \\
                            & Mixed                      & 2      & $[0, \infty]$     & 5             & $[15, 31]$           & 18            & $\alpha^*$           & 6.1e-05     & 3.90     \\
                            & Zero level                 & 3      & $[23.42, 46.84]$  &               &                      & 18            & $\alpha^*$           & 5.9e-05     & 2.96     \\ \hline
\multirow{4}{*}{$h$ free}   & Original                   & 5      & $[16, 32]$        &               &                      & 18            & $\alpha^*$           & 6.1e-05     & 4.89     \\
                            & Modified                   & 2      & $[0, \infty]$     & 23            & $\alpha^*$           &               &                      & 8.5e-05     & 5.11     \\
                            & Mixed                      & 2      & $[0, \infty]$     & 5             & $[15, 31]$           & 18            & $\alpha^*$           & 6.1e-05     & 5.09     \\
                            & Zero level                 & 3      & $[23.45, 46.89]$  &               &                      & 18            & $\alpha^*$           & 5.1e-05     & 4.36     \\ \hline
\end{tabular}

\bigskip
\subcaption*{Panel B: $\glr$, the maximal acceptability with short-selling constraints ($h \geq 0$) is $\alpha^* = 279.62,$ without short-selling constraints ($h$ free) it is $\alpha^* =288.88$.}
\begin{tabular}{cccccccccc}
\hline
                            & \multirow{2}{*}{Algorithm} & \multicolumn{2}{c}{Step 1} & \multicolumn{2}{c}{Bisection on $q$} & \multicolumn{2}{c}{Bisection on $x$} & $x_U - x_L$ & Run time \\
                            &                            & Iter  & $[x_L, x_U]$       & Iter          & $[x_L, x_U]$         & Iter          & $[x_L, x_U]$         &             & (s)      \\ \hline
\multirow{4}{*}{$h \geq 0$} & Original                   & 9     & $[256, 512]$       &               &                      & 22            & $\alpha^*$           & 6.1e-05     & 19.45    \\
                            & Modified                   & 2     & $[0, \infty]$      & 29            & $\alpha^*$           &               &                      & 7.4e-05     & 18.42    \\
                            & Mixed                      & 2     & $[0, \infty]$      & 8             & $[254, 510]$         & 22            & $\alpha^*$           & 6.1e-05     & 19.75    \\
                            & Zero level                 & 3     & $[279.62, 559.24]$ &               &                      & 22            & $\alpha^*$           & 6.7e-05     & 14.54    \\ \hline
\multirow{4}{*}{$h$ free}   & Original                   & 9     & $[256, 512]$       &               &                      & 22            & $\alpha^*$           & 6.1e-05     & 39.56    \\
                            & Modified                   & 2     & $[0, \infty]$      & 29            & $\alpha^*$           &               &                      & 7.9e-05     & 40.85    \\
                            & Mixed                      & 2     & $[0, \infty]$      & 8             & $[254, 510]$         & 22            & $\alpha^*$           & 6.1e-05     & 41.66    \\
                            & Zero level                 & 3     & $[288.88, 577.76]$ &               &                      & 22            & $\alpha^*$           & 6.8e-05     & 32.17    \\ \hline
\end{tabular}

\bigskip
\subcaption*{Panel C: $\raroc$, the maximal acceptability with short-selling constraints ($h \geq 0$) is $\alpha^* = 2.98,$ without short-selling constraints ($h$ free) it is $\alpha^* =3.08$.}
\begin{tabular}{cccccccccc}
\hline
                            & \multirow{2}{*}{Algorithm} & \multicolumn{2}{c}{Step 1} & \multicolumn{2}{c}{Bisection on $q$} & \multicolumn{2}{c}{Bisection on $x$} & $x_U - x_L$ & Run time \\
                            &                            & Iter    & $[x_L, x_U]$     & Iter          & $[x_L, x_U]$         & Iter          & $[x_L, x_U]$         &             & (s)      \\ \hline
\multirow{4}{*}{$h \geq 0$} & Original                   & 2       & $[2, 4]$         &               &                      & 15            & $\alpha^*$           & 6.1e-05     & 5.88     \\
                            & Modified                   & 2       & $[0, \infty]$    & 18            & $\alpha^*$           &               &                      & 6.0e-05     & 5.88     \\
                            & Mixed                      & 2       & $[0, \infty]$    & 2             & $[1, 3]$             & 15            & $\alpha^*$           & 6.1e-05     & 5.62     \\
                            & Zero level                 & 2       & $[2.98, 5.96]$   &               &                      & 15            & $\alpha^*$           & 9.1e-05     & 5.53     \\ \hline
\multirow{4}{*}{$h$ free}   & Original                   & 2       & $[2, 4]$         &               &                      & 15            & $\alpha^*$           & 6.1e-05     & 6.91     \\
                            & Modified                   & 2       & $[0, \infty]$    & 18            & $\alpha^*$           &               &                      & 6.3e-05     & 8.00     \\
                            & Mixed                      & 2       & $[0, \infty]$    & 3             & $[3, 7]$             & 18            & $\alpha^*$           & 6.1e-05     & 8.12     \\
                            & Zero level                 & 2       & $[3.08, 6.15]$   &               &                      & 15            & $\alpha^*$           & 9.4e-05     & 7.04     \\ \hline
\end{tabular}

\end{table}

\bibliographystyle{alpha}


\begin{thebibliography}{BCDK16}
	
	\bibitem[ADEH99]{ArtznerDelbaenEberHeath1999}
	P.~Artzner, F.~Delbaen, J.-M. Eber, and D.~Heath.
	\newblock Coherent measures of risk.
	\newblock {\em Math. Finance}, 9(3):203--228, 1999.
	
	\bibitem[AFP12]{AcciaioFollmerPenner2010}
	B.~Acciaio, H.~F{\"o}llmer, and I.~Penner.
	\newblock Risk assessment for uncertain cash flows: model ambiguity,
	discounting ambiguity, and the role of bubbles.
	\newblock {\em Finance and Stochastics}, 16:669--709, 2012.
	
	\bibitem[AN04]{AgarwalNaik2004}
	V.~Agarwal and N.~Y. Naik.
	\newblock {Risks and Portfolio Decisions Involving Hedge Funds}.
	\newblock {\em The Review of Financial Studies}, 17(1):63--98, 01 2004.
	
	\bibitem[AP11]{AcciaioPenner2010}
	B.~Acciaio and I.~Penner.
	\newblock Dynamic risk measures.
	\newblock {\em In G. Di Nunno and B. {\O}ksendal (Eds.), Advanced Mathematical
		Methods for Finance, Springer}, pages 1--34, 2011.
	
	\bibitem[BBN14]{BiaginiBion-Nadal2012}
	S.~Biagini and J.~Bion-Nadal.
	\newblock Dynamic quasi-concave performance measures.
	\newblock {\em Journal of Mathematical Economics}, 55:143--153, December 2014.
	
	\bibitem[BCC15]{BCC2014}
	T.~R. Bielecki, I.~Cialenco, and T.~Chen.
	\newblock Dynamic conic finance via {B}ackward {S}tochastic {D}ifference
	{E}quations.
	\newblock {\em SIAM J. Finan. Math.}, 6(1):1068--1122, 2015.
	
	\bibitem[BCDK16]{BCDK2013}
	T.~R. Bielecki, I.~Cialenco, S.~Drapeau, and M.~Karliczek.
	\newblock Dynamic assessment indices.
	\newblock {\em Stochastics: An International Journal of Probability and
		Stochastic Processes}, 88(1):1--44, 2016.
	
	\bibitem[BCIR13]{BCIR2012}
	T.~R. Bielecki, I.~Cialenco, I.~Iyigunler, and R.~Rodriguez.
	\newblock {D}ynamic {C}onic {F}inance: Pricing and hedging via dynamic coherent
	acceptability indices with transaction costs.
	\newblock {\em International Journal of Theoretical and Applied Finance},
	16(01):1350002, 2013.
	
	\bibitem[BCP17]{BCP2014}
	T.~R. Bielecki, I.~Cialenco, and M.~Pitera.
	\newblock A survey of time consistency of dynamic risk measures and dynamic
	performance measures in discrete time: {LM}-measure perspective.
	\newblock {\em Probability, Uncertainty and Quantitative Risk}, 2(3):1--52,
	2017.
	
	\bibitem[BCP18]{BCP2014a}
	T.~R. Bielecki, I.~Cialenco, and M.~Pitera.
	\newblock A unified approach to time consistency of dynamic risk measures and
	dynamic performance measures in discrete time.
	\newblock {\em Mathematics of Operations Research}, 43(1):204--221, 2018.
	
	\bibitem[BCZ14]{BCZ2010}
	T.~R. Bielecki, I.~Cialenco, and Z.~Zhang.
	\newblock Dynamic coherent acceptability indices and their applications to
	finance.
	\newblock {\em Mathematical Finance}, 24(3):411--441, 2014.
	
	\bibitem[BDB15]{BelliniBernardino15}
	F.~Bellini and E.~Di~Bernardino.
	\newblock Risk management with expectiles.
	\newblock {\em European Journal of Finance}, 05 2015.
	
	\bibitem[BL00]{BernardoLedoit00}
	A.~Bernardo and O.~Ledoit.
	\newblock Gain, loss, and asset pricing.
	\newblock {\em J. of Polit. Econ.}, 108:144--172, 2000.
	
	\bibitem[CK13]{CheriditoKromer2012}
	P.~Cheridito and E.~Kromer.
	\newblock {Reward-Risk Ratio}.
	\newblock {\em Journal of Investment Strategies}, 3(1):1--16, 2013.
	
	\bibitem[CM09]{ChernyMadan2009}
	A.~Cherny and D.~B. Madan.
	\newblock New measures for performance evaluation.
	\newblock {\em The Review of Financial Studies}, 22(7):2571--2606, 2009.
	
	\bibitem[EM14]{EberleinMadan14}
	E.~Eberlein and D.~B. Madan.
	\newblock Maximally acceptable portfolios.
	\newblock In {\em Kabanov Y., Rutkowski M., Zariphopoulou T. (eds) Inspired by
		Finance}. Springer, 2014.
	
	\bibitem[GISW02]{GoetzmannETAL02}
	W.~N. Goetzmann, J.~E. Ingersoll, M.~I. Spiegel, and I.~Welch.
	\newblock Sharpening {S}harpe ratios.
	\newblock {\em NBER Working Paper No. 9116}, page~51, 2002.
	
	
	\bibitem[KMZ17]{KarnamETAL17}
	C.~Karnam, J.~Ma and J.~Zhang.
	\newblock Dynamic approaches for some time-inconsistent optimization problems.
	\newblock {\em Ann. Appl. Probab.}, 27(6):3435--3477, 2017.
		
	
	\bibitem[KR19]{KovacovaRudloff2019}
	G.~Kov\'a\v{c}ov\'a and B.~Rudloff.
	\newblock Time consistency of the mean-risk problem.
	\newblock {\em Preprint arXiv:1806.10981}, 2019.
	
	\bibitem[KY91]{KonnoYamazaki91}
	H.~Konno and H.~Yamazaki.
	\newblock Mean-absolute deviation portfolio optimization model and its
	applications to {T}okyo stock market.
	\newblock {\em Management science}, 37(5):519--531, 1991.
	
	\bibitem[LW16]{LoehneWeissing2016}
	A.~L{\"o}hne and B.~Wei{\ss}ing.
	\newblock Equivalence between polyhedral projection, multiple objective linear
	programming and vector linear programming.
	\newblock {\em Mathematical Methods of Operations Research}, 84(2):411--426,
	Oct 2016.
	
	\bibitem[MRS03]{MartinETAL03}
	R.~D. Martin, S.~Z. Rachev, and F.~Siboulet.
	\newblock Phi-alpha optimal portfolios and extreme risk management.
	\newblock {\em The Best of Wilmott 1: Incorporating the Quantitative Finance
		Review}, 1:223, 2003.
	
	\bibitem[MS16]{MadanSchoutens2016}
	D.~B. Madan and W.~Schoutens.
	\newblock {\em Applied conic finance}.
	\newblock Cambridge University Press, Cambridge, 2016.
	
	\bibitem[OBS{\etalchar{+}}05]{OrtobelliETAL05}
	S.~Ortobelli, A.~Biglova, S.~Stoyanov, S.~Z. Rachev, and F.~Fabozzi.
	\newblock A comparison among performance measures in portfolio theory.
	\newblock {\em IFAC Proceedings Volumes}, 38(1):1 -- 5, 2005.
	\newblock 16th IFAC World Congress.
	
	\bibitem[RGS13]{RosazzaGianinSgarra2012}
	E.~Rosazza~Gianin and E.~Sgarra.
	\newblock Acceptability indexes via $g$-expectations: an application to
	liquidity risk.
	\newblock {\em Mathematical Finance (2013) 7:457-475}, 2013.
	
	\bibitem[Rie04]{Riedel2004}
	F.~Riedel.
	\newblock Dynamic coherent risk measures.
	\newblock {\em Stochastic Process. Appl.}, 112(2):185--200, 2004.
	
	\bibitem[Sha64]{Sharpe1964}
	W.~F. Sharpe.
	\newblock Capital asset prices: A theory of market equilibrium under conditions
	of risk.
	\newblock {\em Journal of Finance}, 19:425--442, 1964.
	
	\bibitem[SS01]{SortinoSatchell01}
	F.~A. Sortino and S.~Satchell.
	\newblock {\em Managing downside risk in financial markets}.
	\newblock Butterworth-Heinemann, 2001.
	
	\bibitem[SY84]{ShalitYityaki84}
	H.~Shalit and S.~Yitzhaki.
	\newblock {M}ean-{G}ini, portfolio theory, and the pricing of risky assets.
	\newblock {\em J. of Finance}, 39(5):1449--1468, 1984.
	
	\bibitem[You98]{Young98}
	M.~R. Young.
	\newblock A minimax portfolio selection rule with linear programming solution.
	\newblock {\em Management science}, 44(5):673--683, 1998.
	
\end{thebibliography}

\newcommand{\etalchar}[1]{$^{#1}$}

\end{document}